\documentclass[12pt, draftclsnofoot, onecolumn]{IEEEtran}

\usepackage[T1]{fontenc}
\usepackage{graphicx}
\usepackage{cite}
\usepackage{subcaption}
\usepackage{overpic}
\usepackage{amssymb}
\usepackage[cmex10]{amsmath}
\usepackage{amsthm}
\usepackage{url}
\usepackage{bbm}
\usepackage{color}

\IEEEoverridecommandlockouts

%

\def\CN{\mathcal{N}_{\mathbb{C}}}
\def\Real{\mathbb{R}}
\def\Complex{\mathbb{C}}
\def\Integer{\mathbb{Z}}

\def\Ex{\mathbb{E}}
\def\sinc{\mathrm{sinc}}
 
\def\imagunit{i} 

\theoremstyle{plain}

\newtheorem{lemma}{Lemma}
\newtheorem{corollary}{Corollary}

\newtheorem{definition}{Definition}

\newcommand{\vect}[1]{{\bf{#1}}}


\graphicspath{{./images/}}

\usepackage[margin=0.90in]{geometry}

\begin{document}

\title{\huge{Spatially-Stationary Model for \\Holographic MIMO Small-Scale Fading}\vspace{-0.3cm}}

\author{
\IEEEauthorblockN{Andrea Pizzo, \emph{Member, IEEE}, Thomas L. Marzetta, \emph{Fellow, IEEE}, Luca Sanguinetti, \emph{Senior Member, IEEE}
\thanks{\vspace{-0.5cm}
\newline \indent Part of this work was presented at ISIT18 \cite{Marzetta2018}. A. Pizzo and T. Marzetta are with the Department of Electrical and Computer Engineering, Tandon School of Engineering, 11201 Brooklyn, NY (andrea.pizzo@nyu.edu, tom.marzetta@nyu.edu). L.~Sanguinetti is with the Dipartimento di Ingegneria dell'Informazione, University of Pisa, 56122 Pisa, Italy (luca.sanguinetti@unipi.it).}
}}
\maketitle

 \vspace{-1.5cm}
 
\begin{abstract} \vspace{-0.2cm}
Imagine an array with a massive (possibly uncountably infinite) number of antennas in a compact space. We refer to a system of this sort as Holographic MIMO.
 Given the impressive properties of Massive MIMO, one might expect a holographic array to realize extreme spatial resolution, incredible energy efficiency, and unprecedented spectral efficiency. At present, however, {its fundamental limits have not been conclusively established.} A major challenge for the analysis and understanding of such a paradigm shift is the lack of mathematically tractable and numerically reproducible channel models that retain some semblance to the physical reality. Detailed physical models are, in general, too complex for tractable analysis.
This paper aims to take a closer look at this interdisciplinary challenge. Particularly, we consider the small-scale fading in the far-field, and we model it as a zero-mean, spatially-stationary, and correlated Gaussian scalar random field. A physically-meaningful correlation is obtained by requiring that the random field be consistent with the scalar Helmholtz equation. This formulation leads directly to a rather simple and exact description of the three-dimensional small-scale fading as a Fourier plane-wave spectral representation.
Suitably discretized, this yields a discrete representation for the field as a Fourier plane-wave series expansion, from which a computationally efficient way to generate samples of the small-scale fading over spatially-constrained compact spaces is developed. The connections with the conventional tools of linear systems theory and Fourier transform are thoroughly discussed.
\vspace{-0.0cm}
\end{abstract}

\begin{IEEEkeywords}
Holographic MIMO, spatially-stationary random field, Helmholtz equation, Fourier spectral representation, non-isotropic propagation, physical channel modeling.
\end{IEEEkeywords}

\vspace{-0.5cm}
\section{Introduction} \label{sec:Introduction}
Massive MIMO refers to a wireless network technology where the base stations are equipped with a very large number $N$ of antennas to serve a multitude of $K$ terminals by spatial multiplexing \cite{Marzetta2010}. Thanks to the intense research performed over the last decade, 
Massive MIMO is today a mature technology \cite{MarzettaBook,LucaBook} whose key ingredients have made it into the 5G New Radio standard \cite{Five}. Its advantages in terms of spectral efficiency \cite{Ngo2013,UnlimitedCapacity}, energy efficiency \cite{Sanguinetti_2015}, and power control \cite{Cheng2017} are well understood and recognized. The channel capacity was shown to increase theoretically unboundedly in the regime where $N\to \infty$ while $K$ is fixed \cite{UnlimitedCapacity}. In practice, however, the number of antennas that fits into the common form factor of a base station site is fundamentally limited.
Hence, a natural question is \cite{Sanguinetti2019a}: \emph{how can we {practically} approach the limit $N\to \infty$?} 
{One solution is to integrate a massive (possibly infinite) number of antennas into a {compact} space, that is, a Holographic MIMO array.
In its asymptotic form, this can be thought of as a spatially-continuous electromagnetic aperture having an uncountably infinite number of antennas separated by an infinitesimal distance. This is the ultimate form of a Holographic MIMO array as $N\to \infty$.
Research in this field is taking place under the names of holographic radio-frequency systems \cite{Prather2017}, and large intelligent surfaces \cite{Rusek2018}. 
The technology developed in \cite{Pivotal}, known as holographic beamforming, is the first step in this direction.
 \vspace{-0.3cm}
\subsection{Motivation}
 {Realistic performance assessment of Holographic MIMO technologies requires the use of a channel model that reflects the main characteristics of a massive number of antennas in a compact space. The wireless channel is typically composed of the so-called large-scale fading and small-scale fading \cite{TseBook}. The former occurs on a larger scale -- typically a few hundred wavelengths -- and is due to distance-dependent pathloss, shadowing, antenna gains, and penetration losses, while the latter is a microscopic effect caused by small variations in the propagation environment. Both play a key role in wireless communications. This paper only considers the small-scale fading. In the multiple antenna literature, it is described by a complex-valued vector ${\bf h} \in {\mathbb{C}^N}$, which is modeled either deterministically or stochastically (e.g., \cite[Sec. 7.3]{LucaBook}). Examples of deterministic models are ray-tracing models and recorded channel measurements. The perfect line-of-sight (LoS) propagation model is another one, which is used in \cite{Rusek2018,Dardari2019a} to evaluate the capacity and compute the degrees of freedom of a planar continuous electromagnetic surface (referred to as a large intelligent surface). A common drawback of deterministic models is that they are only valid for specific scenarios. 
 
Unlike deterministic models, stochastic approaches are independent of a particular environment and, consequently, allow for far-reaching general conclusions. A well-known stochastic model in the far-field with no line-of-sight (NLoS) path is the independent and identically distributed (i.i.d.) Rayleigh fading, where $\vect{h} \sim \CN( \vect{0}_N, \vect{I}_N)$ is modeled as a circularly symmetric complex Gaussian vector with zero-mean and covariance matrix $\vect{I}_N$. 
This model has been (and still is) the basis of most theoretical research in multiple antenna systems. For example, in Massive MIMO it leads to neat, understandable closed-form spectral efficiency expressions \cite{MarzettaBook}. However, it is widely recognized to be inadequate when antenna spacing reduces, and spatial correlation naturally arises \cite{Sanguinetti2019a}. A tractable way to model spatially-correlated channels with NLoS path is the correlated Rayleigh fading model \cite{Sanguinetti2019a}: $\vect{h} \sim \CN( \vect{0}_N, \vect{R})$, 
where $\vect{R} \in \mathbb{C}^{N \times N}$ is the spatial correlation matrix. Different models exist for the generation of $\vect{R}$. Consistency with the physics laws of propagation implies that such models are based on a superposition of plane-waves (e.g., \cite[Sec. 7.3]{LucaBook}), which shifts the focus on the characterization of their amplitudes only.
The vast majority is driven by analytical arguments, which allow us to capture key characteristics of the propagation environment but inevitably tend to leave out significant physical phenomena. Moreover, the spatial correlation matrix is typically modeled to capture the propagation environment and array geometry jointly. 
For example, the i.i.d. Rayleigh fading models the small-scale fading over a half-wavelength spaced linear array surrounded by an isotropic propagation environment. 
Their combined action makes it hard to infer the physical properties of the propagation environment as antenna spacing reduces. For example, arrays with larger antenna spacing in a less scattered environment yield the same spatial correlation as arrays with smaller antenna spacing but in a more scattered environment \cite{Poon2004,Poon2006}. 


A stochastic model for the small-scale fading that retains some semblance to the physical reality is the widely known Clarke's model, e.g., \cite{Clarke} and \cite{Aulin79} for a two-dimensional (2D) and three-dimensional (3D) analysis, respectively. The model assumes a NLoS scenario with scalar radio waves (i.e., no polarization) propagating in the far-field of an isotropic (i.e., no dominant spatial directivity) random scattering environment \cite{Aris2000}. Under these circumstances, Clarke's model is exact, and its autocorrelation function between two arbitrary points in 2D and 3D spaces at a distance $r$ is, respectively, equal to $J_0(2 \pi r/\lambda)$ and $\sinc(2 r/\lambda)$ with $\lambda$ being the wavelength \cite{PaulrajBook,MolischBook}.\footnote{We use the definition $\sinc(x)=\sin(\pi x)/\pi x$ throughout this paper.} Notably, the channel exhibits spatial correlation even if there is no spatial directivity.

  
 This paper aims to generalize Clarke's model to \emph{non-isotropic} random scattering environments, with the purpose to obtain a generalized stochastic model for the small-scale fading of 3D spatially-stationary channels, which is: $(i)$ physically meaningful; $(ii)$ mathematically tractable; and $(iii)$ numerically reproducible. 
By focusing on a continuous formulation of the problem, rather than discrete, we simplify the theoretical analysis and obtain a rather simple expression for the small-scale fading. Also, spatial sampling and discrete formulation somehow tend to hide fundamental results, which are otherwise revealed by a continuous analysis \cite{Baggeroer}.
Previous works in this direction can be found in \cite{Poon2004, Poon2006}, and \cite{Debbah2015}. Unlike these works, we treat radio wave propagation as a linear system, which provides us with a simple and intuitive interpretation of the proposed model through linear system theory and Fourier transform, without the recourse to special functions (i.e., Green function \cite{Poon2004, Poon2006}, spherical harmonics \cite{Debbah2015}), detailed parametric models.

 \vspace{-0.2cm}
\subsection{Contribution}
We consider the 3D small-scale fading in the far-field
and assume that it can be modeled as a zero-mean, spatially-stationary, and correlated Gaussian scalar random field satisfying the Helmholtz equation in the frequency domain (equivalent to the scalar wave equation in the time domain), as dictated by physics. This modeling yields the \emph{only physically-meaningful} spatial correlation function, whose power spectral density (in the spatial-frequency or \emph{wavenumber} domain) is impulsive with support on the surface of a sphere of radius ${\kappa = 2\pi/\lambda}$, and uniquely described by a \emph{spectral factor} that specifies directionality and physically characterizes the propagation environment in its most general form. The structure of the spatial correlation function leads directly to the so-called 2D Fourier plane-wave spectral representation for the field, which is given by a superposition of a continuum of plane-waves having zero-mean statistically-independent Gaussian-distributed random amplitudes. 
We discuss the connection between the derived formula and the space-wavenumber Fourier spectral representation, which represents the spatial counterpart of the time-frequency mapping for time-domain random processes. Notably, we show that the small-scale fading has a singularly-integrable bandlimited spectrum in the wavenumber domain that is defined on a disk of radius $\kappa$. This is a direct consequence of the Helmholtz equation, which acts as a 2D linear space-invariant physical filter that projects the number of observable field configurations onto a lower-dimensional space.
Comparisons with Clarke's model are also made to show that this is the closest physically-tenable model to i.i.d. Rayleigh fading.

Also, the bandlimited nature of the 2D Fourier plane-wave spectral representation is exploited to statistically characterize the small-scale fading over Holographic MIMO arrays of compact size.
The result is a 2D Fourier plane-wave series expansion having a countably-finite number of zero-mean statistically-independent Gaussian-distributed random coefficients with given finite variance. We show that the provided model can be interpreted as the asymptotic version of the famous Karhunen-Lo{\`e}ve series expansion.
Finally, spatial discretization of the field yields a numerical procedure to generate small-scale fading samples through inverse fast Fourier transform efficiently.

\emph{Reproducible Research}: The Matlab code package to reproduce all numerical results is available at \url{https://github.com/lucasanguinetti/Holographic-MIMO-Small-Scale-Fading}.

 \vspace{-0.3cm}
\subsection{Outline and Notation}
The rest of this paper is organized as follows. Section \ref{sec:Model} introduces the continuous physical modeling for the 3D small-scale fading and studies the implication of the Helmholtz equation on its second-order statistics. Section \ref{sec:Fourier_spectral} derives the 2D Fourier plane-wave spectral representation and elucidates the connection with the well-known Fourier spectral representation. Section~\ref{sec:Isotropic} deals with isotropic propagation and shows that Clarke's model can be derived from the proposed model. A linear-system theoretic interpretation of general non-isotropic small-scale fading models is also provided.
Section~\ref{sec:Numerical} derives the Fourier plane-wave series expansion for a spatially-continuous compact space, which is used to numerically generate small-scale fading samples over Holographic MIMO arrays.
The numerical method is validated by means of Monte Carlo simulations.
Section~\ref{sec:conclusions} concludes with a general discussion and outlook on the developed analytical framework.

We will use upper (lower) case letters for frequency (time) entities. We use calligraphic letters for indicating sets and boldfaced lower case letters for vectors. 
$\mathbb{E}\{\cdot\}$ denotes the expectation operator. The notation ${n \sim \mathcal{N}_\mathbb{C}(0, \sigma^2)}$ stands for a circularly-symmetric Gaussian variable with variance $\sigma^2$. 
We use $\mathbb{R}^n$ to denote the $n$-dimensional real-valued vector space. 
$\nabla^2 = \frac{\partial^2}{\partial {x}^2} + \frac{\partial^2}{\partial {y}^2} + \frac{\partial^2}{\partial {z}^2}$ is the scalar Laplace operator. $\mathbbm{1}_{\mathcal{X}}(x)$ is the indicator function of a subset $x\in\mathcal{X}$. $\lceil x\rceil$ gives the smallest integer equal to or greater than $x$. $\delta(x)$ is the Dirac delta function while $\delta_n$ is the Kronecker delta.


 \vspace{-0.3cm}
\section{Physics-based Continuous Small-Scale Fading Modeling}  \label{sec:Model}
In the far-field of a homogeneous, isotropic, source-free, and {scattered} infinite medium, each of the three Cartesian components $(E_x,E_y,E_z)\in\Complex^3$ of the electrical field is a function of four scalar variables: the frequency $\omega$ (or, equivalently, the time $t$) and three Cartesian coordinates $(x,y,z)\in\Real^3$ denoting the spatial position.
Electromagnetic waves without polarization behave similarly to acoustic waves \cite{MarzettaIT}, and the small-scale fading 
can be modeled as a \emph{space-frequency scalar random field} \cite{PaulrajBook}
\begin{equation} \label{chan}
\left\{h_\omega(x,y,z):(x,y,z)\in\Real^3, \omega\in(-\infty,\infty)\right\}.
\end{equation}
A primary interest in wireless communications lies in scenarios in which $h_\omega$ can be modeled as a zero-mean, spatially-stationary, and Gaussian random field \cite{Marzetta2018}.
The Gaussian assumption is valid whenever 
the distance between scattering events 
is significant compared to the wavelength $\lambda$, but small compared to the distance between source and receiver \cite{Aris2000}. 
In this paper, we restrict our analysis to monochromatic waves, i.e., propagating at the same single frequency $\omega$. Therefore, we omit the subscript $_{\omega}$ in \eqref{chan} and call $h =h_\omega$. 
Following \cite{Baggeroer}, we pursue a continuous approach since discrete formulation may leave out significant physical underlying properties of wave propagation.

 \vspace{-0.3cm}
\subsection{Plane-Wave Solution}


\begin{figure*}[t!]
   \begin{minipage}[b]{.4\linewidth}
    \centering
     \includegraphics[width=\columnwidth]{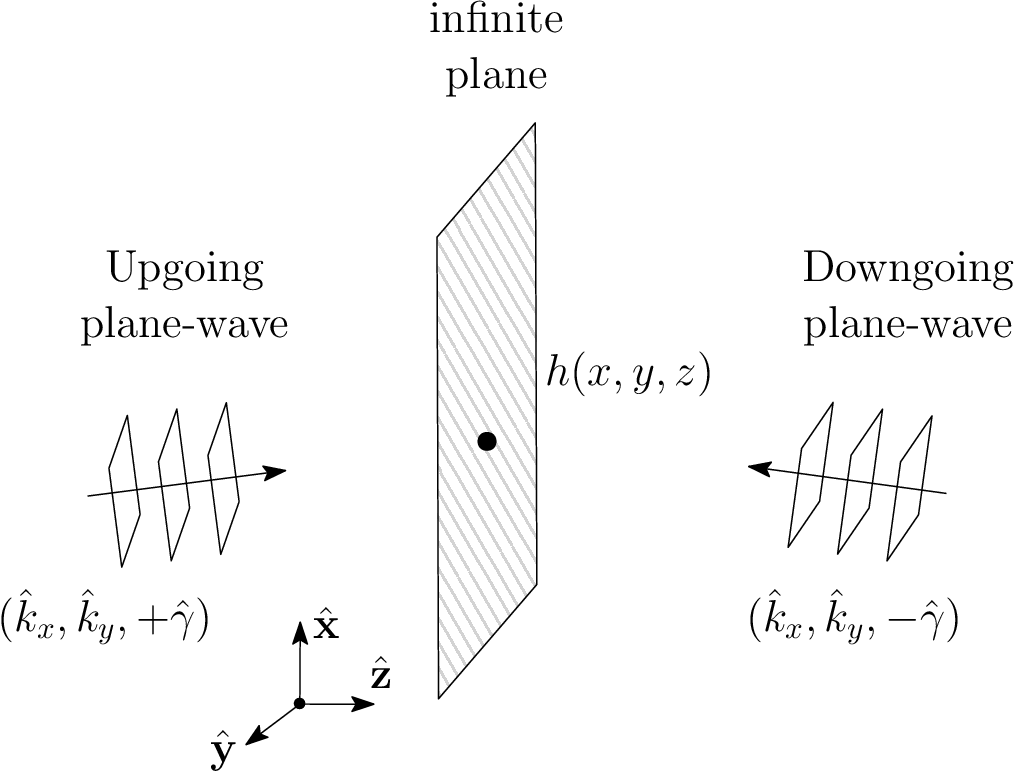} \vspace{-0.5cm}
     \subcaption{\vspace{-0cm}Deterministic LoS scenario.\vspace{-0.3cm}}\label{fig:Field_RX_LoS}
   \end{minipage}
   \hfill 
    \begin{minipage}[b]{.58\linewidth}
     \centering
     \includegraphics[width=\columnwidth]{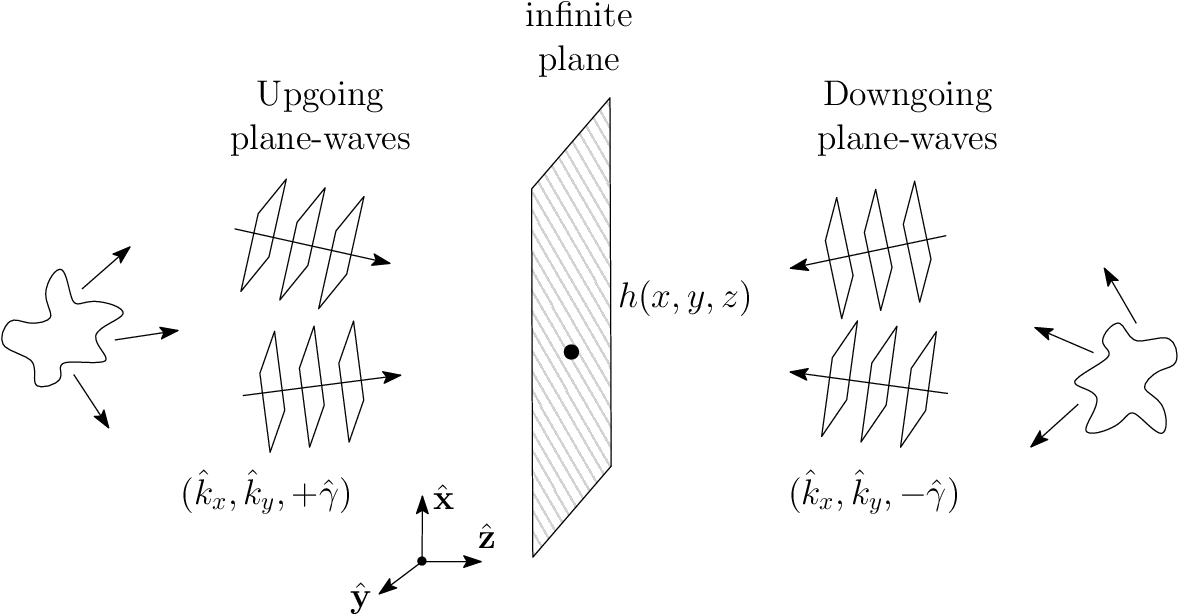} \vspace{-0.5cm}
     \subcaption{\vspace{-0cm}Random scattered NLoS scenario.\vspace{-0.3cm}}\label{fig:Field_RX}
   \end{minipage}
  \caption{Propagation of scalar plane-waves in a 3D environment.\vspace{-0.8cm}}
   \label{fig:Field}
\end{figure*}

The electromagnetic nature of the small-scale fading requires each realization of $h(x,y,z)$ to satisfy (with probability $1$) the scalar Helmholtz equation in the frequency domain. In a source-free environment, this means that \cite[Eq.~(1.2.17)]{ChewBook}:
\begin{equation} \label{eq:Helmholtz_frequency_mono} 
\left(\nabla^2 + \kappa^2  \right) h(x,y,z)  =  0
\end{equation}
where $\kappa = \frac{\omega}{c} = \frac{2 \pi}{\lambda}$ is the wavenumber (i.e., the angular displacement in radians per unit of length) 
with $c$ being the speed of light.
The Helmholtz condition in \eqref{eq:Helmholtz_frequency_mono} is a second-order linear partial differential equation with constant coefficients.
In analogy with ordinary differential equations, we are led to assume a solution with probability $1$ of the form \cite{HildebrandBook}
\begin{equation} \label{eq:solution}
h(x,y,z) = H e^{\imagunit (k_x x + k_y y + k_z z)} 
\end{equation}
where $(k_x,k_y,k_z)\in\Real^3$ and $H\in\Complex$ are unknown constant parameters.
A solution of this form is also motivated by the fact that \eqref{eq:Helmholtz_frequency_mono} is an eigenvalue equation in the Laplacian operator whose eigenfunction solutions can be obtained by inspection after recalling that the Laplacian operator $\nabla^2$ is linear and space-invariant \cite{HildebrandBook}.
The substitution of \eqref{eq:solution} into \eqref{eq:Helmholtz_frequency_mono} transforms it into an algebraic equation for all $(x,y,z)\in\Real^3$,
which yields the non-trivial solution $k_x^2+k_y^2+k_z^2 = \kappa^2$.
Thus, $k_z$ is determined, apart from a sign, by the other two components $k_x$ and $k_y$:
\begin{equation} \label{eq:kappa_z}
k_{z} = \pm \gamma(k_x,k_y) = \pm \sqrt{\kappa^2 - k_x^2 - k_y^2}.
\end{equation} 
Since we restrict the analysis to $k_z \in \Real$, it thus follows that $(k_x,k_y)$ must have compact support
\begin{equation} \label{eq:disk_wavenumber}
\mathcal{D}(\kappa) = \left\{(k_x,k_y)\in\Real^2 : k_x^2 + k_y^2 \le \kappa^2\right\}
\end{equation}
given by a disk of radius $\kappa$ centered on the origin. 
From \eqref{eq:solution} and \eqref{eq:kappa_z}, we thus have that 
\begin{equation} \label{eq:plane_wave}
\{H^+ e^{\imagunit (k_x x + k_y y + \gamma(k_x,k_y) z)}\} \quad \{H^- e^{\imagunit (k_x x + k_y y - \gamma(k_x,k_y) z)}\}
\end{equation}
are both distinct eigenfunctions of \eqref{eq:Helmholtz_frequency_mono}. 
As illustrated in Fig.~\ref{fig:Field_RX_LoS}, they describe two incident \emph{plane-waves}\footnote{Their wavefronts form infinite planes all oriented towards a propagating direction $(\hat{k}_x,\hat{k}_y,\pm\hat{\gamma}(k_x,k_y)) = (\frac{k_x}{\kappa},\frac{k_y}{\kappa},\pm\frac{\gamma(k_x,k_y) }{\kappa})$ \cite{ChewBook}.} impinging on the spatial point $(x,y,z)$ and spatially-propagating respectively through the left (upgoing) or right (downgoing) half-spaces created by an infinite plane passing through $(x,y,z)$ and perpendicular to the arbitrarily chosen $z-$axis.
We also notice that, by imposing the condition $(k_x,k_y,k_z)\in \Real^3$, we exclude the so-called \emph{evanescent waves} from the analysis (since they decay exponentially fast in space and contribute to the near-field propagation) and consider only \emph{propagating waves} as in Fig.~\ref{fig:Support_PSD_Disk}.

Notice that \eqref{eq:plane_wave} constitutes, for every $(k_x,k_y)\in \mathcal{D}(\kappa)$, the eigenspace solution to the Helmholtz equation and thus a solution to \eqref{eq:Helmholtz_frequency_mono} (with probability $1$) can be obtained as a linear combination of those eigenfunctions
\begin{equation} \label{eq:particular_solution}
h(x,y,z) = \iint_{\mathcal{D}(\kappa)} \Big(H^+(k_x,k_y) e^{\imagunit (k_x x + k_y y + \gamma(k_x,k_y) z)} +  H^-(k_x,k_y) e^{\imagunit (k_x x + k_y y - \gamma(k_x,k_y) z)} \Big) dk_xdk_y.
\end{equation}
In electromagnetic literature, this is known as a {general homogeneous plane-wave solution} to the Helmholtz equation (e.g., \cite[Ch.~6.7]{StrattonBook}),
where the wave amplitudes ${H^\pm(k_x,k_y) : \mathcal{D}(\kappa)\to\Complex}$ are now complex-valued functions taking arbitrary values within $\mathcal{D}(\kappa)$ so that \eqref{eq:particular_solution} is suitably convergent \cite{HildebrandBook}.
In other words, there are a possibly \emph{uncountably infinite} scattered waves that impinge on $(x,y,z)$ from directions ${(\hat{k}_x,\hat{k}_y,\pm\hat{\gamma})}$ and have complex-valued random amplitudes $H^\pm$. This effect is due to the interaction with the scattering environment, as showed in Fig.~\ref{fig:Field_RX}.
The scattered waves satisfy the Helmholtz equation independently, which in turn implies that \eqref{eq:particular_solution} satisfies it, due to the linearity of its operator.

The electromagnetic characterization of plane-wave amplitudes $H^\pm$ is done by using either ray-tracing modeling of the propagation environment \cite{Dardari2019a} or numerical methods for solving partial differential equations \cite{Salazar}. While these approaches account for specific geometries and are very accurate, they are scenario dependent and hard to work with mathematically.
We rather pursue a \emph{statistical approach} for modeling $H^\pm$ that leads to an analytically-tractable model.



\subsection{The Statistical Implication of the Helmholtz Equation}

For the zero-mean, spatially-stationary and Gaussian random field $h$, the spatial autocorrelation function
\begin{equation} \label{eq:ACF}
c_h (x,y,z) =  \Ex\{{h^*(x^\prime,y^\prime,z^\prime)} {h(x+x^\prime,y+y^\prime,z+z^\prime)}\}
\end{equation}
computed for every pair of points $(x^\prime,y^\prime,z^\prime)$ and $(x+x^\prime,y+y^\prime,z+z^\prime)$ provides a complete statistical description in the spatial domain.
Alternatively, $h$ can  be statistically described in the wavenumber domain by its power spectral density:
\begin{align} \label{eq:PSD}
S_h(k_x,k_y,k_z) = \iiint_{-\infty}^{\infty} \  c_h(x,y,z) \, e^{-\imagunit (k_x x + k_y y + k_z z)} \, dx dy dz 
\end{align}
from which $c_h$ follows as
\begin{align}   \label{eq:ACF_W}
c_h(x,y,z) =  \frac{1}{(2\pi)^3} \iiint_{-\infty}^{\infty}  S_h(k_x,k_y,k_z)  \, e^{\imagunit (k_x x + k_y y + k_z z)} \, dk_x dk_y dk_z.
\end{align}
Notice that the spatial and wavenumber (also known as \emph{spatial-frequency}) domains represent respectively the time and frequency counterparts of the classical Fourier analysis of time-domain signal \cite{Baggeroer}.
By applying the Laplacian operator to both sides of \eqref{eq:ACF} and interchanging the order of integration and derivation, we obtain
 \begin{equation} \label{eq:Helmholtz_pointsource_c_h}
\nabla^2  c_h(x,y,z) + \kappa^2 c_h(x,y,z) = 0
\end{equation}
which implies that the spatial autocorrelation itself satisfies the Helmholtz equation \cite{Marzetta2018}. 
By plugging \eqref{eq:ACF_W} into \eqref{eq:Helmholtz_pointsource_c_h} and interchanging the order of integration and differentiation yields 
\begin{equation} \label{eq:PSD_constraint}
\left(k_x^2 + k_y^2 + k_z^2 - \kappa^2\right) \, S_h(k_x,k_y,k_z) = 0.
\end{equation}
This implies that $S_h$ must vanish everywhere except on the wavenumber support 
\begin{equation} \label{eq:d_sphere}
\mathcal{S}(\kappa) = \{(k_x,k_y,k_z) \in \Real^3 : k_x^2 + k_y^2 + k_z^2 = \kappa^2\}
\end{equation}
of an impulsive sphere of radius $\kappa$  centered on the origin (see Fig.~\ref{fig:Support_PSD_dsphere}).
This is a consequence of the so-called \emph{spectral concentration} effect \cite{FranceschettiBook}, which follows from the fact that we consider an infinite propagation medium. 
Now, being $S_h$ defined over a support with zero measure, it cannot be interpreted as an ordinary function, but rather as a singular Delta distribution \cite{Johnson}.
Also, since any two distributions that are identical except for a set of zero measure returns the same Lebesgue integral \cite{GallagerBook},
the following result is established.
\begin{lemma}[\!\!\cite{Marzetta2018}] \label{lemma:PSD}
The power spectral density of any $h(x,y,z)$ obeying \eqref{eq:Helmholtz_frequency_mono} is in the form\footnote{The interpretation of a Dirac delta function having an argument which is a non-linear function is discussed in \cite{Marzetta99}.}
\begin{equation} \label{eq:PSD_function}  
S_h(k_x,k_y,k_z) = A_h^2\left({k_x},{k_y},{k_z}\right) \, \delta(k_x^2 + k_y^2 + k_z^2 - \kappa^2)
\end{equation}
where $A_h$ is a real-valued non-negative deterministic field called {spectral factor}, which model the spatial selectivity of the scattering.
\end{lemma}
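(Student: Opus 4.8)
The plan is to show that the distributional constraint \eqref{eq:PSD_constraint} admits \emph{only} solutions of the claimed form. First I would isolate the scalar prefactor $g(k_x,k_y,k_z) = k_x^2+k_y^2+k_z^2-\kappa^2$ and record two structural facts about it: it vanishes precisely on the sphere $\mathcal{S}(\kappa)$ of \eqref{eq:d_sphere}, and its gradient $\nabla g = 2(k_x,k_y,k_z)$ has magnitude $2\kappa \neq 0$ everywhere on that sphere. Hence $g$ has a \emph{simple zero} along $\mathcal{S}(\kappa)$, which is the property that the entire argument rests on.

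Next I would reduce \eqref{eq:PSD_constraint} to a one-dimensional problem by exploiting this simple zero. Passing to spherical-shell coordinates, the radial variable $\rho = \sqrt{k_x^2+k_y^2+k_z^2}$ becomes the coordinate transverse to the sphere, with the two angular variables acting as spectators. Writing $g = \rho^2-\kappa^2 = (\rho+\kappa)(\rho-\kappa)$ and noting that $\rho+\kappa$ is smooth and non-vanishing near $\rho=\kappa$, the relation $g\,S_h = 0$ is equivalent, after dividing out this harmless factor, to $(\rho-\kappa)\,S_h = 0$ in the radial direction. The one-dimensional identity I would then invoke is the standard distributional fact that the only solutions of $u\,T(u)=0$ are $T(u)=c\,\delta(u)$; in particular no derivatives $\delta^{(n)}(u)$ can appear, since $u\,\delta'(u)=-\delta(u)\neq 0$. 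Restoring the angular dependence in the coefficient and converting $\delta(\rho-\kappa)$ back to $\delta(g)$ (the two differing only by the constant $2\kappa$, which I absorb into the coefficient), this forces
\begin{equation}
S_h(k_x,k_y,k_z) = A(k_x,k_y,k_z)\,\delta\!\left(k_x^2+k_y^2+k_z^2-\kappa^2\right)
\end{equation}
for some coefficient $A$ supported on $\mathcal{S}(\kappa)$.

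Finally I would pin down the nature of $A$. Because $c_h$ in \eqref{eq:ACF} is Hermitian, its transform $S_h$ is real, so $A$ is real; and because $S_h$ is a genuine power spectral density of a stationary field, the Wiener--Khinchin / Bochner theorem gives $S_h \geq 0$. Since $\delta(g)$ is a non-negative measure on the sphere, this forces $A\geq 0$, and writing $A = A_h^2$ recovers exactly \eqref{eq:PSD_function} together with the stated non-negativity of the spectral factor $A_h$.

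The main obstacle is the step that excludes derivative-of-delta contributions such as $\delta'(g)$ or $\delta''(g)$: a priori any distribution supported on $\mathcal{S}(\kappa)$ could involve normal derivatives of the surface delta, and one must rule these out. This is precisely where the simple-zero property is indispensable, and it is cleanest to dispose of it through the one-dimensional identity $u\,T(u)=0 \Rightarrow T = c\,\delta(u)$ rather than manipulating $g\,\delta^{(n)}(g)$ directly in three dimensions, where the bookkeeping is considerably heavier.
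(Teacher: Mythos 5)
Your proof is correct, and it follows the same overall path as the paper---both treatments start from the algebraic constraint \eqref{eq:PSD_constraint} and ask which distributions can satisfy it---but your execution supplies the key technical step that the paper's own argument omits. The paper reasons as follows: $S_h$ vanishes off the sphere $\mathcal{S}(\kappa)$ of \eqref{eq:d_sphere}, a set of measure zero, hence it ``cannot be interpreted as an ordinary function, but rather as a singular Delta distribution.'' As stated, that argument does not exclude distributions such as $\delta'\!\left(k_x^2+k_y^2+k_z^2-\kappa^2\right)$ or higher transverse derivatives, which are equally well supported on $\mathcal{S}(\kappa)$; ruling these out is precisely the content of your reduction to the coordinate transverse to the sphere and the one-dimensional identity $u\,T(u)=0\Rightarrow T=c\,\delta(u)$ (together with the observation $u\,\delta'(u)=-\delta(u)\neq 0$), which is applicable exactly because the prefactor $g=k_x^2+k_y^2+k_z^2-\kappa^2$ has a nonvanishing gradient, i.e.\ a simple zero, on the sphere. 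Your closing step---realness of the coefficient from the Hermitian symmetry of $c_h$ in \eqref{eq:ACF}, non-negativity from Bochner/Wiener--Khinchin---also makes explicit what the paper encodes silently in the notation $A_h^2$. What the paper's version buys is brevity and physical transparency; what yours buys is an actual uniqueness proof for the impulsive form \eqref{eq:PSD_function}. One residual caveat, which both treatments share: the general non-negative solution of \eqref{eq:PSD_constraint} is a surface delta weighted by an arbitrary non-negative \emph{measure} on $\mathcal{S}(\kappa)$, so the spectral factor $A_h^2$ need not be a bona fide function (a single deterministic plane-wave direction corresponds to a point mass on the sphere); the lemma's description of $A_h$ as a ``field'' is an abuse of terminology that your argument inherits but does not create.
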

Any \emph{physically-meaningful} small-scale fading must have a power spectral density of the form above. This is the statistical implication of the Helmholtz equation on the second-order statistics of $h$.
We notice that the i.i.d. Rayleigh fading model (i.e., obtained as a collection of independent zero-mean, circularly-symmetric Gaussian random variables for $(x,y,z)\in\Real^3$) is strictly speaking incompatible with the above result since it leads directly to an impulsive autocorrelation function $c_h(x,y,z) = S_0 \, \delta(x)\delta(y)\delta(z) $ and, in turn, to a constant power spectral density $S_h(k_x,k_y,k_z) = S_0$ for $(k_x,k_y,k_z)\in \Real^3$ according to \eqref{eq:PSD}. In Section IV, we will show that the model is, however, perfectly consistent with physics under isotropic propagation when the samples of the small-scale fading are taken along a straight line at a spacing of integer multiples of $\lambda/2$.


 \vspace{-0.3cm}
\section{Fourier Description of Physics-Based Channels} \label{sec:Fourier_spectral}

The power spectral density $S_h(k_x,k_y,k_z)$ of any physically meaningful small-fading $h(x,y,z)$ is defined in \eqref{eq:PSD_function} for the entire wavenumber spectrum $(k_x,k_y,k_z)\in\Real^3$. 
The power spectral density of the propagating waves \emph{only} can be obtained by taking the Fourier inversion of $S_h(k_x,k_y,k_z)$ with respect to $k_z \in \Real$ while the other two wavenumber components are held constant.

 \vspace{-0.3cm}
\subsection{Fourier inversion of $S_{h}(k_x,k_y,k_z)$ in $k_z$} 
From \eqref{eq:PSD}, we obtain
\begin{align}  \label{eq:Inversion_PSD_R3}
\frac{1}{2\pi} \int_{-\infty}^{\infty} S_{h}(k_x,k_y,k_z) e^{\imagunit k_z z} \,  dk_z  =  \iint_{-\infty}^{\infty}   c_h(x,y,z) \, e^{-\imagunit (k_x x + k_y y)} \, dx dy.
\end{align}
The composition of the Dirac delta with a differentiable function can be rewritten as \cite[Eq.~(181.a)]{ArfkenBook}
\begin{equation} \label{eq:Dirac_delta}
\delta\left(k_z^2 - (\kappa^2 - k_x^2 - k_y^2)\right) = \frac{\delta(k_z - \gamma) + \delta(k_z + \gamma)}{2 \, \gamma}
\end{equation}
where $\gamma$ is defined in \eqref{eq:kappa_z}.
We denote $A_{h,\pm}^2(k_x,k_y) =A_h^2(k_x,k_y,\pm \gamma)$ the values assumed by the spectral factor at $k_z = \pm \gamma$ along the axis $k_z$  (see Fig.~\ref{fig:Support_PSD_dsphere}). 
By using \eqref{eq:PSD_function} and \eqref{eq:Dirac_delta} into \eqref{eq:Inversion_PSD_R3} yields
\begin{align}  \label{eq:Inversion_PSD_R3_2}
\frac{1}{2\pi} \int_{-\infty}^{\infty} S_{h}(k_x,k_y,k_z) e^{\imagunit k_z z} \,  dk_z  = S_{h}^{+}(k_x,k_y) e^{\imagunit  \gamma z} +  S_{h}^{-}(k_x,k_y) e^{-\imagunit \gamma z}
\end{align}
which is the sum of the two 2D power spectral densities
\begin{equation} \label{eq:Spectral_Factor_projection_3d}
S_{h}^{\pm}(k_x,k_y) = \frac{A_{h,\pm}^2(k_x,k_y)/4\pi}{\sqrt{\kappa^2 - k_x^2 - k_y^2}}, \quad (k_x,k_y)\in\mathcal{D}(\kappa)
\end{equation}
where we have replaced $\gamma$ with its expression in \eqref{eq:kappa_z} and $(k_x,k_y)\in\mathcal{D}(\kappa)$ since we only considered propagating waves. The Fourier inversion of \eqref{eq:Inversion_PSD_R3_2} with respect to $(k_x,k_y)\in\mathcal{D}(\kappa)$ yields the autocorrelation function $c_h$ as given in \eqref{eq:Inversion_PSD_R3}. 

This intermediate result is instrumental to obtain later a statistical characterization of the propagating wave amplitudes $H^\pm$ in \eqref{eq:particular_solution} that is based on $S_{h}^{\pm}$. For this reason, \eqref{eq:Spectral_Factor_projection_3d} are referred to as \emph{plane-wave spectrums}. 
One should not be worried about the singularity at the boundary of their spectral support $\mathcal{D}(\kappa)$. If the spectral factor $A_{h}$ is bounded on $\mathcal{D}(\kappa)$, this singularity can be removed by applying a change of integration variables (as shown in Appendix~\ref{app:numerical_generation_part2} for a constant $A_{h}$), which makes $S_{h}^{\pm}$ singularly-integrable spectrums.

As illustrated in Fig.~\ref{fig:Support_PSD_dsphere}, $S_{h}^{\pm}$ are obtained from $S_{h}$ by independently parametrizing the upper and lower hemispheres of $\mathcal{S}(\kappa)$ on the 2D wavenumber disk support $\mathcal{D}(\kappa)$, respectively. 
Their magnitudes are driven by the Jacobian determinant of the spherical parametrization ${\varphi(k_x,k_y): \mathcal{D}(\kappa) \to \mathcal{S}(\kappa)}$ induced by the $k_z$-Fourier inversion
\begin{equation}
J_\varphi(k_x,k_y) = \sqrt{\left(\frac{\partial \gamma}{\partial k_x}\right)^2 + \left(\frac{\partial \gamma}{\partial k_y}\right)^2 +1}  
= \frac{\kappa}{\sqrt{\kappa^2 - k_x^2 - k_y^2}} 
\end{equation}
which leads to large values near the boundary of $\mathcal{D}(\kappa)$.


\begin{figure*}[t!]
   \begin{minipage}[b]{.5\columnwidth}
    \centering
     \includegraphics[width=.85\columnwidth]{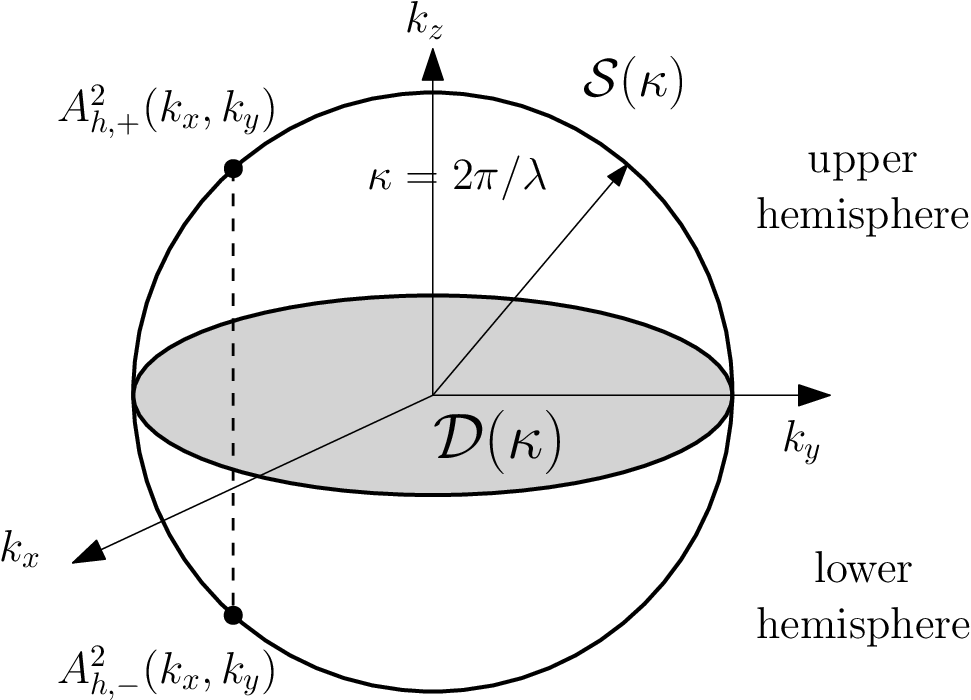} \vspace{-0cm}
     \subcaption{\vspace{-0cm}{Spectral sphere $\mathcal{S}(\kappa)$ and its 2D parametrization $\mathcal{D}(\kappa)$.}\vspace{-0.2cm}}\label{fig:Support_PSD_dsphere}
   \end{minipage}
   \hfill 
    \begin{minipage}[b]{.5\columnwidth}
     \centering
     \includegraphics[width=.63\columnwidth]{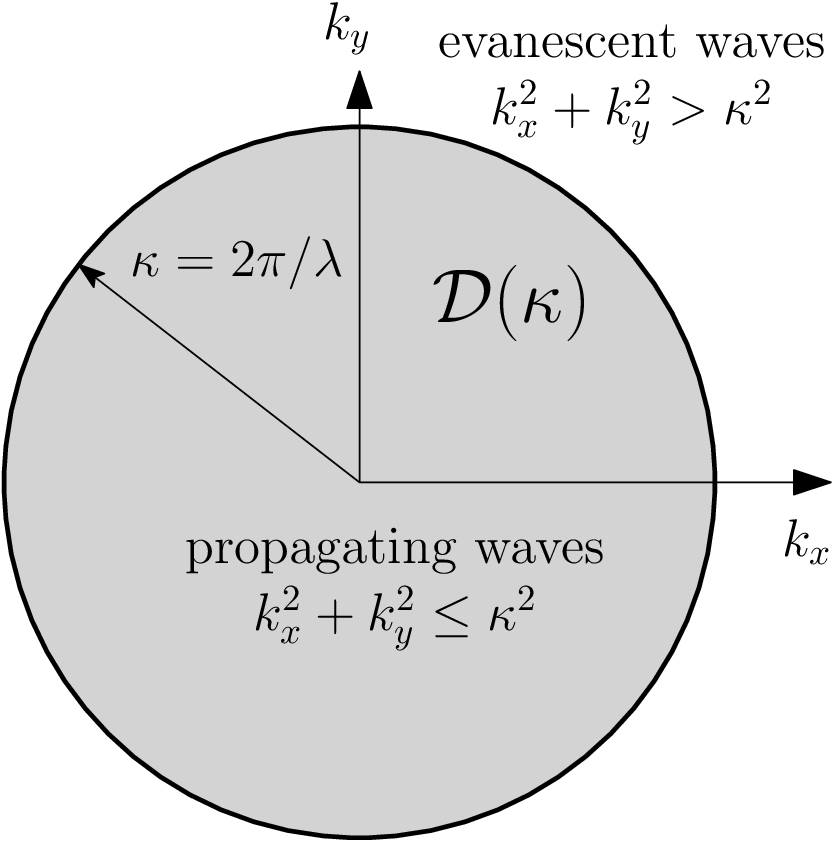} \vspace{-0cm}
     \subcaption{\vspace{-0cm}{Propagating and evanescent waves.}\vspace{-0.2cm}}\label{fig:Support_PSD_Disk}
   \end{minipage}
  \caption{Wavenumber support of the power spectral density of the channel $h(x,y,z)$. \vspace{-0.8cm}}
   \label{fig:Support_PSD}
\end{figure*}
\subsection{The 2D Fourier Plane-Wave Spectral Representation}

From \eqref{eq:particular_solution}, to generate a zero-mean, spatially-stationary and Gaussian random field model for $h$ we can  choose the amplitudes $H^\pm$ independently as a 2D collection of \emph{statistically-independent}, circularly-symmetric Gaussian random variables
\begin{equation} \label{eq:amplitudes}
H^\pm(k_x,k_y) =  \sqrt{S_{h}^{\pm}(k_x,k_y)} W^\pm(k_x,k_y)
\end{equation}
where $S_{h}^{\pm}(k_x,k_y)$ are the {plane-wave spectrums} in \eqref{eq:Spectral_Factor_projection_3d} and $W^\pm$ are two 2D independent, zero-mean, complex-valued, white-noise Gaussian random fields with unit variance. Spatial stationarity can thus be directly verified.
Physically, $S_{h}^{\pm}(k_x,k_y)$ are associated with upgoing and downgoing propagating waves in \eqref{eq:particular_solution} and for every $(k_x,k_y)\in\mathcal{D}(\kappa)$ they specify the average power carried by these waves at each direction $(k_x,k_y,\pm\gamma)$. The white-noise fields capture the randomness of the scattering propagation environment.

The substitution of the propagating wave amplitudes \eqref{eq:amplitudes} into the homogeneous solution \eqref{eq:particular_solution} yields the following spectral representation for the 3D small-scale fading \cite{Marzetta2018}
\begin{align} \label{eq:small_scale_R3}
h(x,y,z) 
=  h_+(x,y,z)+h_-(x,y,z)
\end{align} 
with
\begin{equation}  \label{eq:small_scale_R3_halfspace}
h_\pm(x,y,z) = \frac{1}{4\pi\sqrt{\pi}}  \iint_{\mathcal{D}(\kappa)}  \frac{A_{h,\pm}(k_x,k_y)}{(\kappa^2 - k_x^2 - k_y^2)^{1/4}} \, W^\pm(k_x,k_y)  \; e^{\imagunit \left(k_x x + k_y y \pm \sqrt{\kappa^2 - k_x^2 - k_y^2} \, z\right)} \, d k_x d k_y
\end{equation}
where we have used \eqref{eq:kappa_z}. 
Thus, $h(x,y,z) $ is \emph{exactly} expressed as a linear superposition of possibly an uncountably infinite number of upgoing and downgoing propagating waves having statistically-independent Gaussian-distributed random amplitudes. This is the only physically-tenable model for describing any arbitrary spatially-stationary Gaussian random small-scale fading.
Notice that the convergence of \eqref{eq:small_scale_R3_halfspace} is always guaranteed in the mean-square sense since $S_h^\pm(k_x,k_y)$ are singularly-integrable over $\mathcal{D}(\kappa)$ for every upper-bounded spectral factor $A_{h,\pm}(k_x,k_y)$; see Appendix~IV.C.

Also, \eqref{eq:small_scale_R3_halfspace} is reminiscent of a 2D inverse Fourier transform in the spatial variables $(x,y)$, with fixed $z$, where its Fourier harmonics (normalized) physically correspond to propagating waves. In other words, the Fourier transform acts as a continuous {plane-wave decomposition} of the channel, and, for this reason, we generally refer to \eqref{eq:small_scale_R3} and \eqref{eq:small_scale_R3_halfspace} as a \emph{Fourier plane-wave spectral representation} of $h$.
The connection with Fourier theory will be instrumental in deriving in Section~\ref{sec:Numerical} a Fourier plane-wave series expansion over compact spaces, from which a numerical procedure is derived to generate spatial samples of the small-scale fading over apertures of compact size. 

\subsection{Connection with the Fourier Spectral Representation} \label{sec:Spectral}

Statistical integral representations of stationary random processes are available in signal processing. The most important is the Fourier spectral representation \cite[Sec.~3.6]{VanTreesBook}, which can be regarded as the asymptotic version of the famous Karhunen-Loeve series expansion \cite[Sec.~3.2]{VanTreesBook}.   
In Appendix~I, we revise this representation and obtain a handy expression in  \eqref{eq:Spectral_representation_PSD} to work with.
The generalization of this theory to 3D spatially-stationary random fields is treated in \cite[Eq.~(2.12)]{Baggeroer}, which leads to a similar form as \eqref{eq:Spectral_representation_PSD} given by: 
\begin{equation} \label{eq:Spectral_representation_PSD_Rd}
h(x,y,z)  =  \frac{1}{(2\pi)^{3/2}} \iiint_{-\infty}^{\infty}  \sqrt{S_h(k_x,k_y,k_z)} W(k_x,k_y,k_z) e^{\imagunit (k_x x + k_y y + k_z z)} \, dk_x dk_y dk_z 
\end{equation}
where $W$ is a 3D zero-mean, stationary, white-noise Gaussian random field with unit spectrum, $S_h$ is the power spectral density of the random field $h$. 
The direct evaluation of $h$ through \eqref{eq:Spectral_representation_PSD_Rd} requires the computation of the square-root of the impulsive function $S_h$ in \eqref{eq:PSD_function}, which makes the 3D Fourier spectral representation in \eqref{eq:Spectral_representation_PSD_Rd} inadequate to statistically describe physics-based channels.\footnote{Square-root of singular functions are not defined even in the sense of distributions \cite{Johnson}.}

Notice, however, that evaluating \eqref{eq:small_scale_R3} on the infinite plane $z= 0$ yields $h(x,y,0) = h_{+}(x,y,0) + h_{-}(x,y,0)$
with 
\begin{equation} \label{eq:small_scale_R3_z0}
h_{\pm}(x,y,0) = \frac{1}{2\pi} \iint_{-\infty}^{\infty}   \sqrt{S_{h}^{\pm}(k_x,k_y)} W^\pm(k_x,k_y) e^{\imagunit (k_x x + k_y y)}  \, d k_x d k_y
\end{equation}   
where $S_{h}^{\pm}$ and $W^\pm$ are defined in \eqref{eq:amplitudes}. 
The above expression coincides with the 2D Fourier spectral representation.
By using the theory of linear-systems (reviewed in Appendix~I and summarized in Fig.~\ref{fig:LTI_2}), it thus follows that $h_{\pm}(x,y,0)$ can be generated by passing $W^\pm(k_x,k_y)$ through 2D linear space-invariant filters with wavenumber responses $\sqrt{S_{h}^{\pm}(k_x,k_y)}$. 
These physical filters are \emph{bandlimited} with wavenumber bandwidth $|\mathcal{D}(\kappa)| = \pi \kappa^2$ and of second-order having two poles $k_z = \pm \gamma(k_x,k_y)$ on the $k_z-$axis that correspond to the plane-wave spectrum singularities.
This filtering operation is due to the Helmholtz equation \eqref{eq:Helmholtz_frequency_mono}.
Indeed, the Helmholtz operator $\nabla^2+\kappa^2$ is linear and space-invariant; that is, the linear combination of any two solutions is a solution, and a space-shifted version of $h$ is also a solution, and wave propagation can be treated by using the theory of linear systems.
Finally, the physics-based channels $h_\pm(x,y,z)$, evaluated at any infinite plane $z\ne0$, can be obtained as a space-shifted version of $h_{\pm}(x,y,0)$ by passing it through two phase-shift filters with wavenumber responses $e^{\pm\imagunit \gamma z}$; $h(x,y,z)$ is eventually obtained as the sum of the two filter outputs, as shown in Fig.~\ref{fig:GenerationSmallScale}.
These filters are known as \emph{migration filters} in the geophysical literature \cite{Marzetta2018} and describe lossless wave propagation through the left and right half-spaces.

\begin{figure}[t!]
     \centering
     \includegraphics[width=.8\columnwidth]{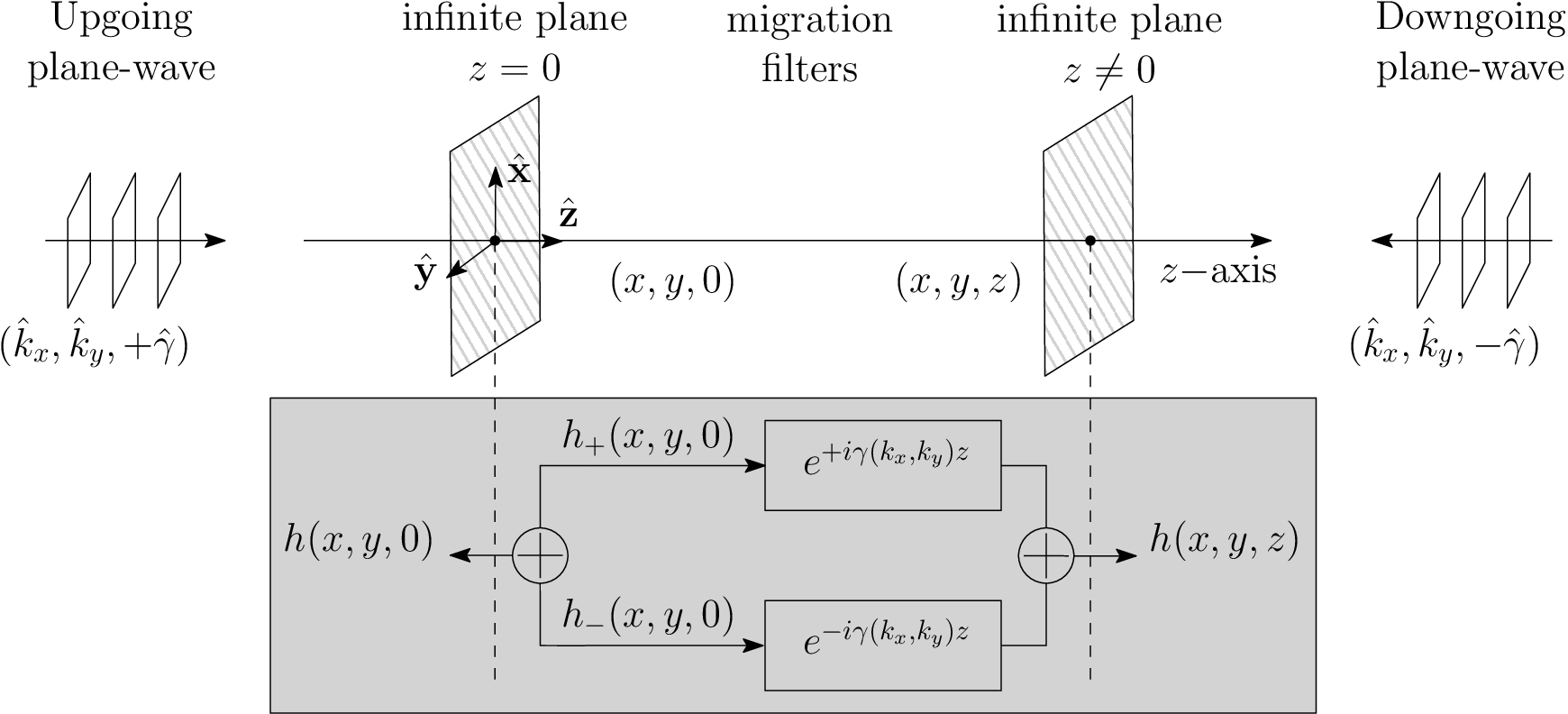} \vspace{-0cm}
  \caption{Extrapolation of the small-scale fading over infinite $z-$planes through migration filters.\vspace{-0.8cm}}
   \label{fig:GenerationSmallScale}
\end{figure}

 \vspace{-0.2cm}
\section{Isotropic Propagation} \label{sec:Isotropic}
We now discuss the connection between the proposed model and the Clarke's model. Particularly, we show that the latter is the closest physically-tenable model for i.i.d. Rayleigh fading and that every non-isotropic channel can be generated by passing a Clarke's isotropic channel through a linear space-invariant filter.
\begin{definition}[\!\!\cite{PaulrajBook}]
An isotropic channel, say $\tilde{h}(x,y,z)$, is characterized by a radially-symmetric spectral factor, which is invariant under rotations. 
\end{definition}
This implies that we can choose the wavenumber coordinates in a convenient way such that they are aligned to one of their axes, say the $k_x-$axis. We may thus write
\begin{equation} \label{eq:spectral_factor3D}
A_{\tilde h}\left({k_x},{k_y},{k_z}\right) = A_{\tilde h}\left(\sqrt{k_x^2 + k_y^2 + k_z^2}, 0 ,0\right) = A_{\tilde h}(\kappa)
\end{equation}
where $A_{\tilde h}$ is the spectral factor associated with the isotropic channel. By assuming that the overall power of $\tilde{h}$ is normalized to $1$, in Appendix~II it is shown that 
\begin{equation}\label{eq:spectral_factor3D_closed_form}
A_{\tilde h}(\kappa) = \frac{2\pi}{\sqrt{\kappa}}.
\end{equation}
By using \eqref{eq:spectral_factor3D} into \eqref{eq:PSD_function}, the power spectral density of $\tilde h$ becomes
\begin{equation} \label{eq:spectral_density3D}
S_{\tilde h}\left({k_x},{k_y},{k_z}\right) = \frac{4\pi^2}{{\kappa}}\delta(k_x^2 + k_y^2 + k_z^2 - \kappa).
\end{equation}
Substituting \eqref{eq:spectral_factor3D_closed_form} into \eqref{eq:small_scale_R3} and \eqref{eq:small_scale_R3_halfspace} yields $\tilde{h}(x,y,z) = \tilde{h}_+(x,y,z) + \tilde{h}_-(x,y,z)$ with
\begin{equation}  \label{eq:small_scale_R3_halfspace_ISO}
\tilde{h}_\pm(x,y,z) = \frac{1}{2\sqrt{\pi \kappa}}  \iint_{\mathcal{D}(\kappa)}  \frac{e^{\imagunit \left(k_x x + k_y y \pm \sqrt{\kappa^2 - k_x^2 - k_y^2} z\right)}}{(\kappa^2 - k_x^2 - k_y^2)^{1/4}} W^\pm(k_x,k_y)  d k_x d k_y
\end{equation}
where $W^\pm$ are two independent 2D Gaussian white noise fields with unit variance.
Although \eqref{eq:spectral_factor3D} implies that each propagating wave carries equal power, after the Fourier inversion along $\kappa_z$, the power carried by each wave in \eqref{eq:small_scale_R3_halfspace_ISO} is no longer constant. This is due to the spherical parametrization discussed above for the generic non-isotropic case.
The evaluation of $\tilde{h}$ at $z= 0$ yields  the 2D Fourier spectral representation $\tilde{h}(x,y,0) = \tilde{h}_+(x,y,0) + \tilde{h}_-(x,y,0)$ with
\begin{equation} \label{eq:small_scale_R3_ISO_z0}
\tilde{h}_\pm(x,y,0) = \frac{1}{2\pi} \iint_{-\infty}^{\infty} \sqrt{S_{\tilde{h}}(k_x,k_y)} W^\pm(k_x,k_y) e^{\imagunit \left(k_x x + k_y y\right)} \, d k_x d k_y
\end{equation} 
where the power spectral density is of the form
\begin{equation} \label{eq:Spectral_Factor_projection_3d_isotropic}
S_{\tilde{h}}(k_x,k_y) =   \frac{\pi/\kappa}{\sqrt{\kappa^2 - k_x^2 - k_y^2}}, \quad (k_x,k_y)\in\mathcal{D}(\kappa)
\end{equation}
which is a bandlimited singularly-integrable spectrum that guarantees convergence of the integral representation \eqref{eq:small_scale_R3_ISO_z0}; see Appendix~IV.C.
The autocorrelation function is available in closed-form. 
%
 \begin{lemma} \label{ACF_Jakes}
 If the channel is isotropic with $A_{\tilde h}(\kappa) = {2\pi}/{\sqrt{\kappa}}$, then the 3D inverse Fourier transform of the spectrum yields the explicit autocorrelation function\footnote{The functional dependence on the distance $r$ between every pair of points is a standard property of isotropic random fields \cite{PaulrajBook}.}
   \begin{equation}\label{eq:Jakes3D}
c_{\tilde{h}}(x,y,z) = \sinc\left(\frac{2 r}{\lambda}\right)=\frac{\sin(\kappa r)}{\kappa r}
\end{equation}
where $r=\sqrt{x^2 + y^2 + z^2}$ is the distance among any pair of spatial points.
 \end{lemma}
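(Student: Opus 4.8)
The plan is to evaluate the three-dimensional inverse Fourier transform \eqref{eq:ACF_W} directly, exploiting the radial symmetry of the isotropic spectrum. Inserting the radially-symmetric spectral factor \eqref{eq:spectral_factor3D} with the unit-power normalization \eqref{eq:spectral_factor3D_closed_form} into the general form \eqref{eq:PSD_function}, and then into \eqref{eq:ACF_W}, I obtain
\begin{equation} \label{eq:plan_start}
c_{\tilde h}(x,y,z) = \frac{1}{(2\pi)^3}\,\frac{4\pi^2}{\kappa}\iiint_{-\infty}^{\infty} \delta\!\left(k_x^2+k_y^2+k_z^2-\kappa^2\right)\, e^{\imagunit(k_x x + k_y y + k_z z)}\, dk_x\,dk_y\,dk_z.
\end{equation}
Because the only surviving wavenumber dependence is through the magnitude $k=\sqrt{k_x^2+k_y^2+k_z^2}$, the value of the integral can depend on the displacement only through its length $r=\sqrt{x^2+y^2+z^2}$, which is precisely the functional dependence anticipated in the footnote to \eqref{eq:Jakes3D}. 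This rotational invariance lets me choose, without loss of generality, the polar axis of a spherical wavenumber coordinate system $(k,\theta,\phi)$ aligned with the spatial displacement vector.

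First I would carry out the angular integration. With the polar axis aligned as above, $k_x x + k_y y + k_z z = k r \cos\theta$ and $dk_x\,dk_y\,dk_z = k^2\sin\theta\, dk\, d\theta\, d\phi$, so the azimuthal integral contributes a factor $2\pi$ and the polar integral is elementary after the substitution $u=\cos\theta$:
\begin{equation} \label{eq:plan_angular}
\int_0^\pi e^{\imagunit k r \cos\theta}\sin\theta\, d\theta = \int_{-1}^{1} e^{\imagunit k r u}\, du = \frac{2\sin(kr)}{kr}.
\end{equation}
This collapses the triple integral in \eqref{eq:plan_start} to the single radial integral $4\pi\int_0^{\infty} \delta(k^2-\kappa^2)\,\frac{\sin(kr)}{kr}\,k^2\, dk$, where the lower limit $k\ge 0$ is inherited from the radial variable.

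Next I would dispose of the radial delta using the composition rule for a Dirac delta with a nonlinear argument, in the same spirit as \eqref{eq:Dirac_delta}, namely $\delta(k^2-\kappa^2)=[\delta(k-\kappa)+\delta(k+\kappa)]/(2\kappa)$. Since $k\ge 0$, only the mass at $k=\kappa$ is picked up, and the radial integral evaluates to $4\pi\cdot\frac{1}{2\kappa}\cdot\frac{\sin(\kappa r)}{\kappa r}\cdot\kappa^2 = \frac{2\pi\sin(\kappa r)}{r}$. Substituting back into \eqref{eq:plan_start} and collecting constants, $\frac{1}{(2\pi)^3}\cdot\frac{4\pi^2}{\kappa}\cdot\frac{2\pi\sin(\kappa r)}{r} = \frac{\sin(\kappa r)}{\kappa r}$, since $4\pi^2\cdot 2\pi = (2\pi)^3$; recalling $\kappa = 2\pi/\lambda$ then identifies this with $\sinc(2r/\lambda)$, as claimed in \eqref{eq:Jakes3D}. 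The computation is largely routine, so the main obstacle is not algebraic but conceptual: I must handle the delta-shell integrand rigorously as a tempered distribution (square-shell deltas being legitimate only under integration) and keep the $(2\pi)$ bookkeeping of the Fourier convention in \eqref{eq:ACF_W} scrupulously consistent, so that the value at $r=0$ comes out to exactly $1$ and thereby confirms the unit-power normalization that fixed $A_{\tilde h}(\kappa)={2\pi}/{\sqrt{\kappa}}$ in the first place.
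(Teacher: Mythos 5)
Your proposal is correct and follows essentially the same route as the paper's Appendix III: spherical wavenumber coordinates with the polar axis aligned to the displacement, the composition rule $\delta(k^2-\kappa^2)=[\delta(k-\kappa)+\delta(k+\kappa)]/(2\kappa)$ retaining only the $k=\kappa$ shell, and the elementary polar integral $\int_{-1}^{1}e^{\imagunit kru}\,du = 2\sin(kr)/(kr)$ (which the paper invokes as Poisson's integral for $j_0$). The only cosmetic difference is that you perform the angular integration before collapsing the radial delta, whereas the paper does the reverse; your constant bookkeeping is consistent and lands on the correct normalization $c_{\tilde h}(0,0,0)=1$.
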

 \begin{proof}
The proof is given in Appendix~III. 
\end{proof}
Since $\frac{\sin(\kappa r)}{\kappa r}$ is zero when $r = k \lambda/2$ with $k\in \Integer$, samples of the isotropic small-scale fading taken along a straight line at a spacing of an integer multiple of $\lambda/2$ are independent. This result holds true also when $\lambda\to 0$ ($\kappa\to \infty$).
\begin{corollary} \label{sinc_approximation}
The samples of the isotropic small-scale fading become asymptotically independent as $\lambda\to 0$ ($\kappa\to \infty$) with
\begin{equation}
c_{\tilde{h}}(x,y,z) = \delta(r)=\delta(x)\delta(y)\delta(z).
\end{equation}
 \end{corollary}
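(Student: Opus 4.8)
The plan is to read the result directly off the closed-form autocorrelation already supplied by Lemma~\ref{ACF_Jakes}, namely $c_{\tilde h}(x,y,z)=\sinc(2r/\lambda)=\frac{\sin(\kappa r)}{\kappa r}$, and to analyze its behaviour as $\kappa\to\infty$ separately at the origin and away from it. First I would record that $c_{\tilde h}$ is continuous at $r=0$ with $\lim_{r\to0}\frac{\sin(\kappa r)}{\kappa r}=1$ for every $\kappa$, so that the unit variance of each sample is preserved throughout the limit. The substance of the statement then lies in the off-origin decay.

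For the decay I would fix an arbitrary separation $r>0$ and use the elementary bound $\bigl|\frac{\sin(\kappa r)}{\kappa r}\bigr|\le \frac{1}{\kappa r}$, which tends to $0$ as $\kappa\to\infty$. Hence the correlation between any two \emph{distinct} spatial points vanishes in the limit. Because $\tilde h$ is a zero-mean \emph{Gaussian} field, vanishing correlation is equivalent to statistical independence; consequently, for any fixed finite collection of distinct sampling points the covariance matrix converges entrywise to the identity $\vect{I}$, which is the precise meaning of ``asymptotically independent samples.'' This is the rigorous core of the corollary and it follows from Lemma~\ref{ACF_Jakes} with no additional machinery.

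It remains to connect this limiting correlation structure -- unity at zero lag, zero at every nonzero lag -- to the symbolic statement $c_{\tilde h}(x,y,z)=\delta(x)\delta(y)\delta(z)$. Here I would appeal to the discussion in Section~\ref{sec:Model}: an i.i.d. Rayleigh field is exactly the one whose autocorrelation is the impulsive $S_0\,\delta(x)\delta(y)\delta(z)$, so the $\delta$ on the right-hand side is to be read as the autocorrelation of a spatially white (i.i.d.) field rather than as a genuine distributional limit. A complementary and perhaps more transparent route is the lattice viewpoint stressed just before the corollary: the zeros of $\sinc(2r/\lambda)$ at $r=k\lambda/2$, $k\in\Integer$, already make samples on a half-wavelength grid exactly independent for \emph{every} $\lambda$, and letting $\lambda\to0$ makes this independent-sample grid dense in $\Real^3$, i.e. the continuum white-noise limit.

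The main obstacle is precisely this last identification. I do not expect $\frac{\sin(\kappa r)}{\kappa r}$ to converge to the three-dimensional Dirac delta in the strong distributional sense: testing against a smooth compactly supported $\phi$ and passing to spherical coordinates reduces the pairing to $\frac{4\pi}{\kappa}\int_0^\infty \sin(\kappa r)\,r\,\Phi(r)\,dr$, where $\Phi$ is the angular average of $\phi$, and an integration by parts (the integrand vanishes at $r=0$) shows this tends to $0$ rather than to $\phi(\vect{0})$. Thus the honest, provable content is the pointwise decorrelation together with the convergence of every finite covariance matrix to $\vect{I}$, and the symbol $\delta(x)\delta(y)\delta(z)$ should be read formally as the i.i.d. Rayleigh correlation structure of Section~\ref{sec:Model}; I would therefore state and prove the corollary in that sense, retaining the $\delta$ notation only as shorthand.
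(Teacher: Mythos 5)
Your proof is correct and in fact more careful than the paper's own argument, but it takes a genuinely different route. The paper disposes of the corollary in one line: it invokes the nascent-delta family $\lim_{a\to 0}\frac{1}{a}\,\sinc(r/a)=\delta(r)$ with $a=\lambda/2$, ``in the distribution sense.'' Note that this identity carries a $1/a$ normalization that $c_{\tilde h}(x,y,z)=\sinc(2r/\lambda)$ itself does not possess, and it then equates the one-dimensional radial $\delta(r)$ with the three-dimensional $\delta(x)\delta(y)\delta(z)$; both steps are formal, and they are exactly the issues your final paragraph isolates --- your spherical-coordinates computation showing that the unnormalized pairing $\frac{4\pi}{\kappa}\int_0^\infty \sin(\kappa r)\,r\,\Phi(r)\,dr$ tends to $0$ rather than to $\phi(\vect{0})$ is correct. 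Your route instead proves the operational content of the statement: for fixed $r>0$ the bound $\bigl|\sin(\kappa r)/(\kappa r)\bigr|\le 1/(\kappa r)\to 0$ gives entrywise convergence of every finite-dimensional covariance matrix to $\vect{I}$, and joint (circularly-symmetric) Gaussianity upgrades uncorrelatedness to independence; the displayed equation is then read as shorthand for the i.i.d. Rayleigh correlation structure of Section~\ref{sec:Model}. What the paper's approach buys is brevity and a direct symbolic match to the displayed equation, at the cost of the missing $2/\lambda$ factor and the radial-versus-3D-delta conflation; what yours buys is an actual proof of ``asymptotic independence of samples,'' which is the property invoked immediately afterwards to argue that isotropic fading is the closest physically-tenable model to i.i.d. Rayleigh fading.
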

  \begin{proof}
The Dirac delta function may be represented through the normalized $\sinc(\cdot)$ function, i.e., ${\lim_{a\to0} \frac{\sinc({r}/{a})}{a} \to \delta(r)}$ with ${a = \lambda/2}$, where the limits must be intended in distribution sense.
\end{proof}
From Lemma~\ref{ACF_Jakes} and Corollary~\ref{sinc_approximation}, we can conclude that that the isotropic small-scale fading ${\tilde h}(x,y,z)$ is the closest physically-tenable model to the i.i.d. Rayleigh fading, which is thus perfectly consistent with physics principles under the conditions above.

There is a 2D counterpart to the 3D theory presented in this paper, where the small-scale fading depends only on $(x,y)$. In this case, the wavenumber support of the power spectral density is an impulsive circle of radius $\kappa$ (compare this to the 3D spectral support in Fig.~\ref{fig:Support_PSD}). 
Under isotropic propagation, by choosing $A_{\tilde h}(\kappa) = 2\sqrt{\pi}$ for normalization purpose (see Appendix~II), we obtain $\tilde{h}(x,y) = \tilde{h}_+(x,y) + \tilde{h}_-(x,y)$ with
\begin{equation}  \label{eq:small_scale_R2_halfspace_ISO}
\tilde{h}_\pm(x,y) = \frac{1}{\sqrt{2 \pi}}  \int_{-\kappa}^\kappa  \frac{e^{\imagunit \left(k_x x \pm \sqrt{\kappa^2 - k_x^2} \, y\right)}}{(\kappa^2 - k_x^2)^{1/4}} W^\pm(k_x)  \, dk_x
\end{equation}
where $W^\pm$ are two independent 1D Gaussian white noise fields with unit variance.
At $y=0$, the 1D Fourier spectral representation reads as $\tilde{h}(x,0) = \tilde{h}_+(x,0) + \tilde{h}_-(x,0)$ with
\begin{equation} \label{eq:small_scale_R2_ISO}
\tilde{h}_\pm(x,0) = \frac{1}{\sqrt{2\pi}} \int_{-\infty}^{\infty}   \sqrt{S_{\tilde{h}}(k_x)} W^\pm(k_x) e^{\imagunit k_x x} \, d k_x 
\end{equation} 
and bandlimited singularly integrable spectrum
\begin{equation} \label{eq:Spectral_Factor_projection_2d}
S_{\tilde{h}}(k_x) = \frac{1}{\sqrt{\kappa^2 - k_x^2}}, \quad k_x\in[-\kappa,\kappa].
\end{equation} 
The autocorrelation function of $\tilde{h}$ can be computed in closed-form. 
\begin{lemma} \label{ACF_Jakes2D}
 If the channel is isotropic with $A_h^2(\kappa) = 4\pi$ (see Appendix~\ref{app:power_normalization}), then the autocorrelation function is
   \begin{equation} \label{eq:Jakes2D}
c_{\tilde{h}}(x,y) = J_0\left(\frac{2\pi r}{\lambda}\right)
\end{equation}
where $r=\sqrt{x^2 + y^2}$, and $J_0\left(x\right)$ is the Bessel function of first kind and order $0$. 
 \end{lemma}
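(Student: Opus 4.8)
The plan is to mirror the 3D computation of Lemma~\ref{ACF_Jakes}, now carried out in two dimensions. First I would write the 2D power spectral density in the form dictated by the Helmholtz constraint (the 2D analog of Lemma~\ref{lemma:PSD}), namely $S_{\tilde h}(k_x,k_y) = A_{\tilde h}^2(\kappa)\,\delta(k_x^2+k_y^2-\kappa^2)$, with the radially-symmetric isotropic spectral factor and $A_{\tilde h}^2(\kappa)=4\pi$. The autocorrelation function is then the 2D inverse Fourier transform
\begin{equation*}
c_{\tilde h}(x,y) = \frac{1}{(2\pi)^2}\iint_{-\infty}^{\infty} S_{\tilde h}(k_x,k_y)\,e^{\imagunit(k_x x + k_y y)}\,dk_x\,dk_y.
\end{equation*}

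Next I would pass to polar coordinates in both domains: set $k_x=k\cos\phi$, $k_y=k\sin\phi$ (so that $dk_x\,dk_y=k\,dk\,d\phi$) and $x=r\cos\psi$, $y=r\sin\psi$, which turns the phase into $k_x x+k_y y = kr\cos(\phi-\psi)$. Isotropy guarantees that the spectral factor depends only on the radial coordinate $k$, so the integral separates into a radial part and an angular part. The radial delta is collapsed using the composition rule $\delta(k^2-\kappa^2)=\delta(k-\kappa)/(2\kappa)$ (retaining only the positive root $k=\kappa$ in the domain of integration), which together with the Jacobian factor $k$ evaluates the radial integral and fixes $k=\kappa$.

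The final step recognizes the surviving angular integral as the classical integral representation of the Bessel function of the first kind, $J_0(z)=\frac{1}{2\pi}\int_0^{2\pi} e^{\imagunit z\cos\theta}\,d\theta$. Since the integrand is $2\pi$-periodic, the offset $\psi$ is irrelevant, so $\int_0^{2\pi} e^{\imagunit\kappa r\cos(\phi-\psi)}\,d\phi = 2\pi J_0(\kappa r)$. Collecting the prefactors $4\pi/(2\pi)^2$, the $1/(2\kappa)$ and $\kappa$ coming from the delta and the Jacobian, and the $2\pi$ from the angular integral leaves exactly $J_0(\kappa r)$; substituting $\kappa = 2\pi/\lambda$ then gives $c_{\tilde h}(x,y)=J_0(2\pi r/\lambda)$ as claimed.

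I expect the only delicate point to be the rigorous handling of the Dirac delta with a nonlinear argument---as flagged for the 3D case in the footnote to Lemma~\ref{lemma:PSD}---and, in particular, checking that the chosen normalization $A_{\tilde h}^2(\kappa)=4\pi$ yields unit total power. This is confirmed \emph{a posteriori}, since $J_0(0)=1$ gives $c_{\tilde h}(0,0)=1$. Everything else is a routine change-of-variables computation.
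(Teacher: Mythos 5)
Your proposal is correct and follows essentially the same route as the paper's (very brief) proof: Appendix~III handles the 2D case by the inverse Fourier transform in polar coordinates, collapsing the radial Dirac delta via $\delta(k_r^2-\kappa^2)=\bigl(\delta(k_r-\kappa)+\delta(k_r+\kappa)\bigr)/(2\kappa)$, and invoking the Bessel integral representation $J_0(z)=\frac{1}{\pi}\int_0^{\pi}e^{\imagunit z\cos k_\theta}\,dk_\theta$ --- exactly your steps, with your $[0,2\pi]$ form of the representation being equivalent by symmetry. Your prefactor bookkeeping ($\tfrac{4\pi}{(2\pi)^2}\cdot\tfrac{1}{2}\cdot 2\pi=1$) and the a posteriori power check $c_{\tilde h}(0,0)=J_0(0)=1$ are consistent with the paper's normalization $A_{\tilde h}^2(\kappa)=4\pi$ from Appendix~II.
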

  \begin{proof}
The proof is sketched at the end of Appendix~III. 
\end{proof}
Lemma~\ref{ACF_Jakes} and Lemma~\ref{ACF_Jakes2D} show that $c_{\tilde{h}}$ is the same autocorrelation function obtained by respectively using the 3D \cite{Aulin79} and 2D \cite{Clarke} Clarke's models.
The same results are obtained by using a ``diffusion approximation'' in \cite{Aris2000}. Hence, the tools developed so far allow to study small-scale fading in its most general form, while being in agreement with previous models.

\subsection{Linear System-Theoretic Interpretation of Scattering}

By using \eqref{eq:Spectral_Factor_projection_3d_isotropic}, we can rewrite \eqref{eq:Spectral_Factor_projection_3d}
 as follows
\begin{equation} \label{eq:PSD_isotropic_Jakes}
S_{h}^{\pm}(k_x,k_y) = \left(\frac{A_{h,\pm}^2(k_x,k_y)}{4\pi^2/\kappa} \right)  \, S_{\tilde{h}}(k_x,k_y)
\end{equation}
which implies that we can generate any channel $h(x,y,0)$ with an arbitrary spectrum by passing an isotropic channel $\tilde{h}(x,y,0)$ through a 2D linear space-invariant filter with wavenumber response given by the spectral factor (up to a normalization factor). Hence, the term between brackets in \eqref{eq:PSD_isotropic_Jakes} can be interpreted as the spatial frequency response of a shaping-filter that turns $\tilde{h}(x,y,0)$ into $h(x,y,0)$; see Fig.~\ref{fig:LTI}.
This is the system-theoretic importance of the isotropic model for generating any random scattering channel.
Then of ${h}(x,y,z)$ and $\tilde{h}(x,y,z)$ can always be obtained by passing their versions evaluated at the infinite plane $z=0$ through the corresponding migration filters $e^{\pm\imagunit \gamma z}= e^{ \pm\imagunit  \sqrt{\kappa^2 - k_x^2 - k_y^2} z}$ as shown in Fig.~\ref{fig:GenerationSmallScale}.


\begin{figure}[t!]
    \centering
     \includegraphics[width=.45\columnwidth]{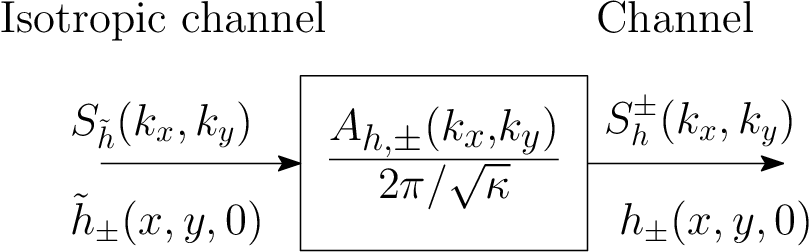}
     \caption{\vspace{-0cm}Linear system-theoretic interpretation of {wave propagation through scattering.} \vspace{-0.8cm}}
     \label{fig:LTI}
\end{figure}

 \vspace{-0.3cm}
\section{Discrete Representation of the Continuous Model} \label{sec:Numerical}
{In the above sections, we have shown that every non-isotropic small-scale fading $h(x,y,z)$ that satisfies the Helmholtz equation, is a bandlimited second-order random field with wavenumber support $\mathcal{D}(\kappa)$.
The small-scale fading $h(x,y,0)$ observed at the infinite plane $z=0$ can be obtained from a bandlimited isotropic small-scale fading $\tilde{h}(x,y,0)$ through a linear space-time invariant filtering operation (see Fig.~\ref{fig:LTI}), which is in turn described by the 2D Fourier spectral representation in \eqref{eq:small_scale_R3_ISO_z0}.
The evaluation of $h(x,y,z)$ at any infinite $z-$plane is implemented by passing $h(x,y,0)$ through migration filters, as shown in Fig.~\ref{fig:GenerationSmallScale}.
All this is used next to develop a numerical procedure to efficiently generate channel samples over compact rectangular apertures (i.e., linear, planar, and volumetric apertures).
}


\subsection{Fourier Plane-Wave Series Expansion of Isotropic Channels}

Consider a compact spatially-continuous rectangular space $\mathcal{V} = \{(x,y)\in\Real^2: x\in[0,L_x], y\in [0,L_y]\}$ of side lengths $L_x,L_y<\infty$.
The same procedure described in Appendix~IV.A to obtain a Fourier spectral series expansion approximation of a bandlimited random process with $\omega\in[-\Omega,\Omega]$ defined over $t\in[0,T]$ can be extended to isotropic spatial random fields observed over a space region $\mathcal{V}$ of finite size. This is obtained by replacing the time-frequency mapping with its space-wavenumber counterpart: the time interval $t\in[0,T]$ must be replaced with the spatial region $(x,y)\in\mathcal{V}$, and the angular-frequency interval $\omega\in[-\Omega,\Omega]$ with the wavenumber region $(k_x,k_y)\in\mathcal{D}(\kappa)$.
In analogy with Appendix~IV.A, we partition the spectral support $\mathcal{D}(\kappa)$ in Fig.~\ref{fig:Support_PSD_Disk} uniformly with  spacing $\Delta_{k_x} = 2\pi/L_x$ and $\Delta_{k_y} = 2\pi/L_y$ along the $k_x$ and $k_y-$axes.
By rescaling the $k_x$ and $k_y-$axes as $\frac{L_x}{2\pi} k_x$ and $\frac{L_y}{2\pi} k_y$, these partitions are indexed by
\begin{equation} \label{ellipse}
\mathcal{E} = \{(\ell,m)\in\Integer^2 : \left({\ell \lambda}/{L_x}\right)^2 + \left({m \lambda}/{L_y}\right)^2 \le 1\}
\end{equation}
which is a 2D lattice ellipse of semi-axes $L_x/\lambda$ and $L_y/\lambda$, as shown in Fig.~\ref{fig:disk_lattice}. The white and red dots indicate the wavenumber harmonics associated with propagating and evanescent plane-waves, respectively.
In the same way \eqref{eq:Spectral_representation_PSD} can be approximated over $t\in[0,T]$ by \eqref{eq:DTFT} as $\Omega T \to \infty$, \eqref{eq:small_scale_R3_ISO_z0} can be approximated over $(x,y)\in\mathcal{V}$ 
as $\min(L_x,L_y)/\lambda\to\infty$ by  
\begin{equation} \label{eq:Fourier_series_planar}
\tilde{h}(x,y) \approx  \mathop{\sum\sum}_{(\ell,m)\in \mathcal{E}} \tilde{H}_{\ell m} e^{\imagunit 2\pi \left(\frac{\ell x}{L_x} + \frac{m y}{L_y}\right)}, \quad (x,y)\in \mathcal{V} 
\end{equation}
where $\tilde{H}_{\ell m} \sim \CN(0, 2 \sigma_{\ell m}^2)$ are statistically-independent Gaussian-distributed random variables with variances $\sigma_{\ell m}^2$ as computed in Appendix~\ref{app:numerical_generation_part2}. 
In analogy with Fourier theory, we refer to \eqref{eq:Fourier_series_planar} as the \emph{Fourier plane-wave series expansion} of $\tilde{h}(x,y)$ over $(x,y)\in\mathcal{V}$.
The channel energy collected over the finite spatial region $\mathcal{V}$ is contained in a \emph{countably-finite} number of angular directions each one of which corresponding to a different propagating wave.  
In analogy to the Fourier series of a time-domain waveform, \eqref{eq:Fourier_series_planar} provides a periodic spatial random field with 2D fundamental period $(x,y)\in\mathcal{V}$. 
The Fourier plane-wave series expansion of $\tilde{h}(x,y,z)$ at different $z$ can be implemented by using migration filters having wavenumber responses $e^{\pm\imagunit \gamma_{\ell m} z}$ with
\begin{equation}
\gamma_{\ell m}= \sqrt{\kappa^2 - \left(\frac{2\pi\ell}{L_x}\right)^2 - \left(\frac{2\pi m}{L_y}\right)^2} 
= \kappa \sqrt{1 -\left( \frac{\ell}{L_x/\lambda}\right)^2 -\left( \frac{m}{L_y/\lambda}\right)^2}
\end{equation}
as obtained from \eqref{eq:kappa_z} by evaluating $(\kappa_x, \kappa_y)$ at $\left(\frac{2\pi\ell}{L_x}, \frac{2\pi m}{ L_y}\right)$.
The representation in \eqref{eq:Fourier_series_planar} becomes\footnote{Since the model accuracy increases as $\min(L_x,L_y)/\lambda$ grows large, we have that $|z|<\min(L_x,L_y)$. Otherwise one may always find a better approximation by exchanging the $z-$axis with one of the other two.}
\begin{equation} \label{eq:Fourier_series_volumetric}
\tilde{h}(x,y,z) \approx  \mathop{\sum\sum}_{(\ell,m)\in \mathcal{E}} \tilde{H}_{\ell m}(z) e^{\imagunit 2\pi \left(\frac{\ell x}{L_x} + \frac{m y}{L_y}\right)}, \quad (x,y)\in \mathcal{V}, |z|<\min(L_x,L_y)
\end{equation}
where 
\begin{equation} \label{eq:Fourier_coeff_3D}
\tilde{H}_{\ell m}(z)  = \tilde{H}_{\ell m}^+  e^{\imagunit \gamma_{\ell m} \; z}  +  \tilde{H}_{\ell m}^- e^{-\imagunit \gamma_{\ell m} \; z}
\end{equation} 
and $\tilde{H}_{\ell m}^\pm \sim \CN(0, \sigma_{\ell m}^2)$. 
The restriction to a linear aperture of length $L_x<\infty$ squeezes the 2D spectral disk $\mathcal{D}(\kappa)$ in Fig.~\ref{fig:Support_PSD_Disk} to the $k_x-$axis so that the wavenumber support becomes a 1D segment $k_x\in[-\kappa,\kappa]$ as shown in \eqref{eq:Spectral_Factor_projection_2d}.
By partitioning the spectral segment uniformly with wavenumber spacing interval $2\pi/L_x$ so that each partition is indexed by $\ell= \{-L_x/\lambda,\ldots, L_x/\lambda-1\}$ (see Fig.~\ref{fig:disk_lattice}), the Fourier plane-wave series expansion of $\tilde{h}(x)=\tilde{h}(x,0,0)$ over $x\in[0,L_x]$ becomes a 1D inverse discrete Fourier transform:
\begin{equation} \label{eq:Fourier_series_linear}
\tilde{h}(x) \approx  \sum_{\ell = -L_x/\lambda}^{L_x/\lambda-1} \tilde{H}_{\ell} \, e^{\imagunit 2\pi \frac{\ell x}{L_x}}, \quad x\in[0,L_x]
\end{equation}
where $\tilde{H}_{\ell} \sim \CN(0, 2\sigma_{\ell}^2)$ are statistically-independent Gaussian-distributed random variables with variances $\sigma_{\ell}^2$ as computed in Appendix~IV.C. 
Next, we use \eqref{eq:Fourier_series_volumetric} and \eqref{eq:Fourier_series_linear} to develop a numerical procedure to efficiently generate channel samples over compact rectangular apertures (i.e., linear, planar, and volumetric apertures).

\begin{figure}[t!]
     \centering
     \includegraphics[width=0.5\columnwidth]{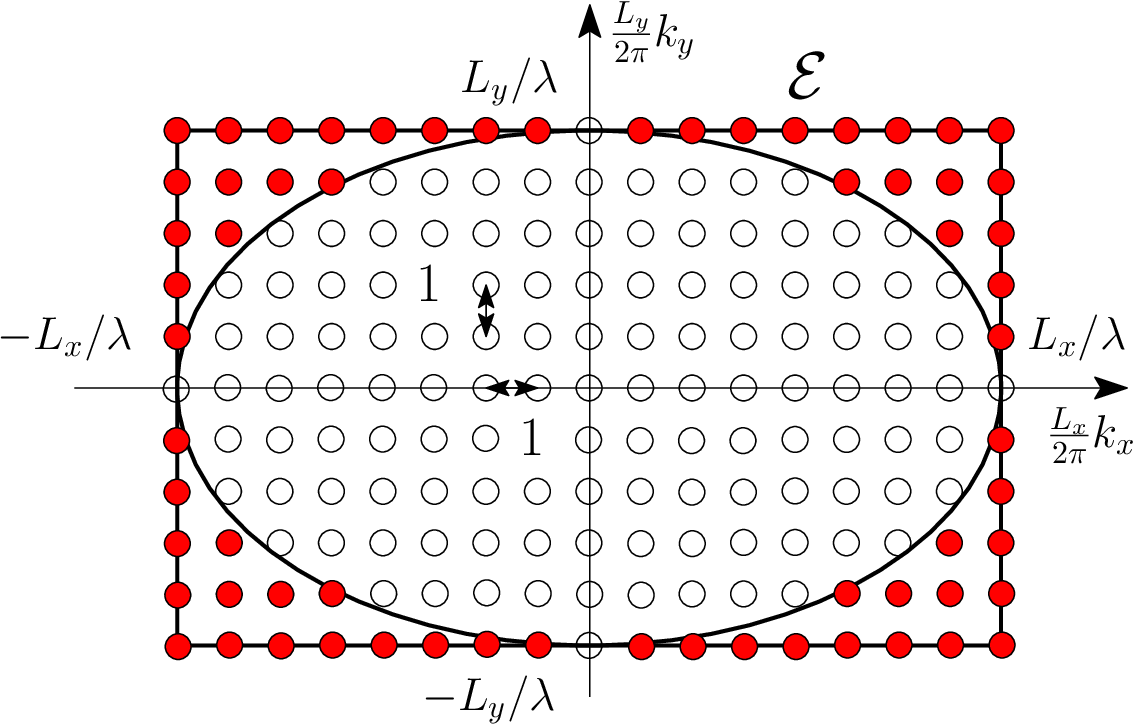} \vspace{-0.4cm}
  \caption{{The 2D lattice ellipse $\mathcal{E}$ wavenumber spectral support of $h(x,y,z)$.} \vspace{-0.7cm}
  } \label{fig:disk_lattice}
\end{figure}

\vspace{-0.3cm}
\subsection{Planar and Volumetric Arrays}

By inspection of \eqref{eq:Fourier_series_volumetric}, we can generate spatial samples of $\tilde{h}$ by using the same argument that  leads to a 2D inverse discrete Fourier transform of time-domain waveforms.
Consider a 3D parallelepiped of side lengths $L_x$, $L_y$, and $L_z<\min(L_x,L_y)$, along the three Cartesian axes, and its uniform discretization of $N = N_x N_y N_z$ points so that $N_x = \lceil L_x/\Delta_x\rceil $, $N_y = \lceil L_y/\Delta_y\rceil $, and $N_z = \lceil L_z/\Delta_z\rceil $ with spacing $\Delta_x$, $\Delta_y$, and $\Delta_z$, respectively.
In each statistical realization of $\tilde{h}$, we can generate spatial samples on the 3D uniform grid as (for an even number of points)
\begin{equation} \label{eq:FFT_volume}
\tilde{h}_{N}(x_n,y_j,z_k) \approx  \mathop{\sum\sum}_{(\ell,m)\in \mathcal{E}} \tilde{H}_{\ell m}(z_k) e^{\imagunit 2\pi \left(\frac{\ell \, n}{N_x} + \frac{m \, j}{N_y}\right)}, \quad n=-\frac{N_x}{2}, \ldots, \frac{N_x}{2}-1, \; j=-\frac{N_y}{2}, \ldots, \frac{N_y}{2}-1.
\end{equation}
which can be efficiently implemented through IFFT algorithms (e.g., \cite{FFT99}) by choosing $N$ to be an integer power of $2$. 
The Nyquist sampling condition for the spatial sampling of $\tilde{h}$ requires
\begin{equation} \label{sampling_requirement2D}
\min(\Delta_x,\Delta_y) \le \frac{\pi}{\kappa} = \frac{\lambda}{2}
\end{equation}
which is given by the fact that $\tilde{h}$ is confined to the 2D spectral support $\mathcal{D(\kappa)}$ with bandwidth less than $2\kappa=4\pi/\lambda$ (along both the $x-$ and $y-$axes).
Hence, the conventional half-wavelength antenna spacing is in general adequate to generate spatial samples of the channel, and half-wavelength arrays can be generally seen as obtained from a spatially-continuous aperture by Nyquist sampling at $\lambda/2$ intervals.
Unlike $\Delta_x$ and $\Delta_y$, the sampling interval $\Delta_{z}$ along the $z-$axis can be chosen arbitrarily. 
When IFFT algorithms \cite{FFT99} are applied, the overall complexity is of order $\mathcal{O}(N  \log(N_x N_y))$ which accounts for the cost of computing a 2D $N_xN_y-$points IFFT for each of the $N_z$ samples.
The generation of channel samples over a 2D plane aperture can be obtained from \eqref{eq:FFT_volume} by noting that $\tilde{h}(x,y)=\tilde{h}(x,y,0)$ and has a computational cost of $\mathcal{O}(N_x N_y \log(N_x N_y))$.

To summarize, Fig.~\ref{fig:Generation_new} depicts the block diagram of the Fourier plane-wave series expansion in \eqref{eq:Fourier_series_volumetric}, for a sufficiently smooth spectral factor.
To generate channel samples $\tilde{h}(x_n,y_j,z_k)$, one needs to: $\emph{i)}$ generate two 2D independent Gaussian random lattice fields $\{\tilde{H}_{\ell m}^\pm\}$, with variances $\{\sigma_{\ell m}^2\}$ as computed in Appendix~\ref{app:numerical_generation_part2}; $\emph{ii)}$ multiply them by their corresponding frequency responses $A_{h,\pm}(k_x,k_y)/(4\pi/\kappa)$ with $(\kappa_x, \kappa_y)$ being evaluated at $\left(\frac{2\pi\ell}{L_x/\lambda}, \frac{2\pi m}{L_y/\lambda}\right)$ to obtain $\{{H}_{\ell m}^\pm\}$; 
$\emph{ii)}$ apply the migration filters for every $z_k = k\Delta_{z}$ with arbitrary $\Delta_{z}$ and sum them up; $\emph{iv)}$ pass the generated lattice field $\{\tilde{H}_{\ell m}(z_k)\}$ through a 2D $N_xN_y-$point IFFT. 
Channel samples at different $z_k$ may be generated from the same $\{{H}_{\ell m}^\pm\}$. 

\begin{figure}[t!]
     \centering
     \includegraphics[width=0.8\columnwidth]{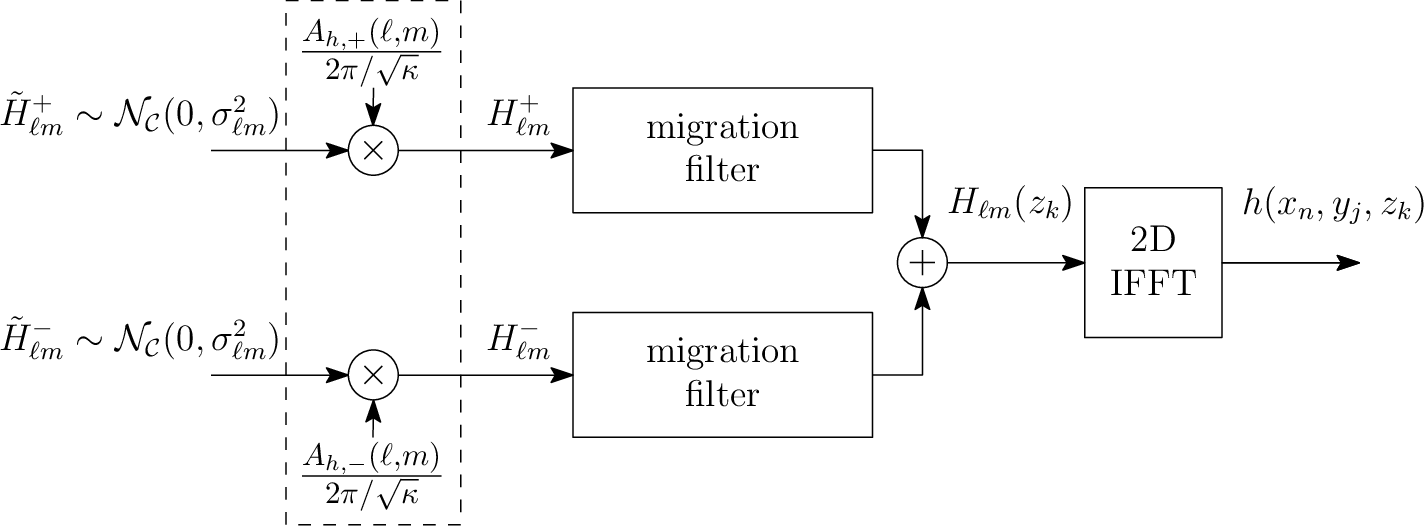} \vspace{-0.4cm}
  \caption{{Block diagram of the numerical generation procedure of 3D spatial channel samples.} \vspace{-0.7cm}
  } \label{fig:Generation_new}
\end{figure}

\vspace{-0.3cm}
\subsection{Linear Arrays}

Given a 1D uniform spatial grid of $N = \lceil L_x/\Delta_x \rceil$ points with spacing $\Delta_x$, we can generate samples of $\tilde{h}$ as
\begin{equation} \label{eq:FFT_linear}
\tilde{h}_{N}(x_n) \approx \sum_{\ell = -L_x/\lambda}^{L_x/\lambda-1} \tilde{H}_{\ell} \, e^{\imagunit  2\pi \ell n/N}, \quad n=-\frac{N}{2}, \ldots, \frac{N}{2}-1
\end{equation}
where $\Delta_x$ is chosen according to the Nyquist sampling condition for a $2\kappa-$bandlimited process
\begin{equation} \label{sampling_requirement}
\Delta_x \le \frac{\lambda}{2}.
\end{equation}
If IFFT is used \cite{FFT99}, the overall complexity of the channel generation procedure is $\mathcal{O}(N \log(N))$.
According to the procedure described above and summarized in Fig.~\ref{fig:Generation_new}, we can generate 1D channel samples by passing \eqref{eq:FFT_linear} through two linear space-invariant filters with wavenumber responses driven by the spectral factors $A_{h,\pm}(k_x)$.

\begin{figure}[t!]
       \centering
      \includegraphics[width=.7\columnwidth]{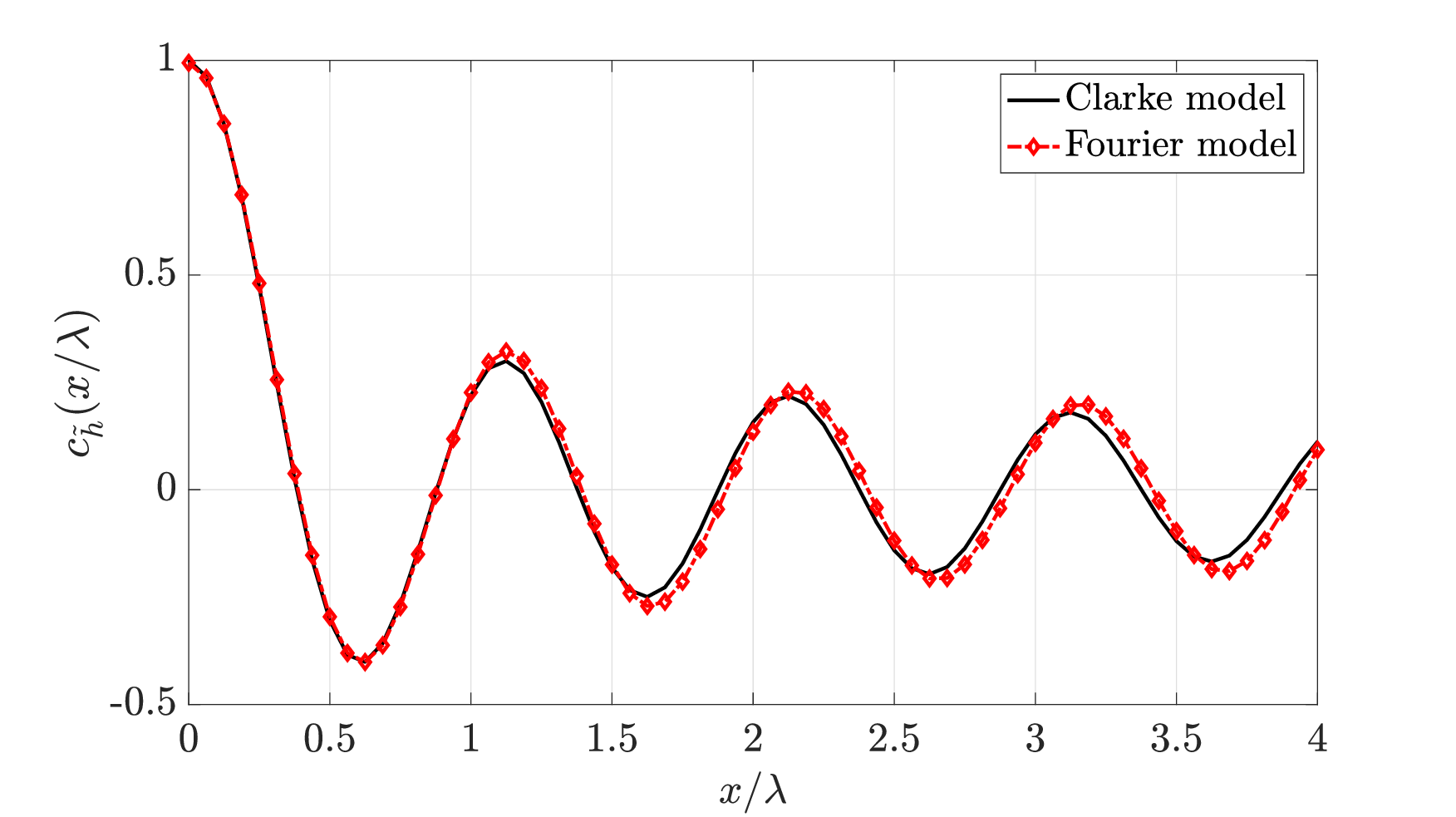} \vspace{-0.4cm}
     \caption{\vspace{-0cm}1D autocorrelation of $h(x)$ as a function of $x/\lambda \in [0, L_x/4]$ with $L_x=16\lambda$, and $\Delta_x= \lambda/16$.      
     \vspace{-0.5cm}}\label{fig:linear_aperture}
\end{figure}

 \vspace{-0.3cm}
\subsection{Numerical Validation} \label{sec:Simulation}

Numerical results are now used to validate the accuracy of the analytical framework developed above. 
We focus on the isotropic propagation scenario since it is the key to generate any non-isotropic channel.
The channel samples are generated as shown above for any $N$-dimensional uniform spatial grid and then collected into a random vector $\tilde{\vect{h}}_N \in\Complex^{N}$. 
The accuracy of the proposed method is compared to the state-of-the-art model of spatially-stationary random field channels, i.e., the discrete Karhunen-Lo{\`e}ve representation $\tilde{\vect{h}}_N = \vect{C}_{\tilde{h}}^{1/2} \vect{e}$
where $\vect{e} \sim \CN(\vect{0},\vect{I}_N)$, and $\vect{C}_{\tilde{h}}\in\Complex^{N\times N}$ is the spatial correlation matrix.
This matrix is computed by sampling Clarke's autocorrelation function \cite{Clarke,Aulin79}, which for a linear aperture is given by $c_h(r) = J_0(2\pi r/\lambda)$ (see Lemma~\ref{ACF_Jakes2D}), whereas it is $c_h(r) = \sinc(2r/\lambda)$ (see Lemma~\ref{ACF_Jakes}) with volumetric and planar apertures.
In general, $\vect{C}_{\tilde{h}}$ is semidefinite positive and has a symmetric block-Toeplitz structure with entries $[\vect{C}_h]_{nm} = c_h(r_{nm})$ with $n,m=1,\ldots, N$, where $r_{nm}$ is the distance between the $n-$th and $m-$th grid points. By choosing a uniform spacing $\Delta$ along the $x-$ and $y-$axes, $\vect{C}_{\tilde{h}}$ becomes of symmetric Toeplitz structure and is therefore fully characterized by its first row (or column). Hence, we compare these two methods by plotting the first row of their spatial correlation matrices. 

We begin by considering a linear aperture with $L_x=16\lambda$. Fig.~\ref{fig:linear_aperture} illustrates the 1D autocorrelation function of the numerically generated samples $\tilde{h}(x_n)$ with spatial sampling ${\Delta = \lambda/16}$. 
 As it is seen, the empirical autocorrelation function matches well its closed form, which is known a priori and given by ${c_{\tilde{h}}(r_n) = J_0\left(\frac{2\pi r_n}{\lambda}\right)}$ with $r_n=x_n$.
Fig.~\ref{fig:planar_aperture} plots the 2D autocorrelation function of $\tilde{h}_{N}(x_n,y_j,0)$ over a rectangular grid on the plane $z_k=0$ of side lengths ${L_x=L_y=16\lambda}$ with uniform spacing ${\Delta = \lambda/4}$.
 Similar conclusions as for 1D apertures hold. Finally, Fig.~\ref{fig:volume_aperture} validates the effect of migration filters to obtain the 2D autocorrelation function of $\tilde{h}_{N}(x_n,y_j,z_k)$ over the same rectangular grid on the plane $z_k=\lambda/2$. These numerical results validate the accuracy of the developed numerical procedure for both the 2D and 3D propagation models and the applicability of the Fourier plane-wave series expansion of the channel to model the field over compact rectangular arrays of practical size.

\begin{figure}[t!]
       \centering
      \includegraphics[width=.7\columnwidth]{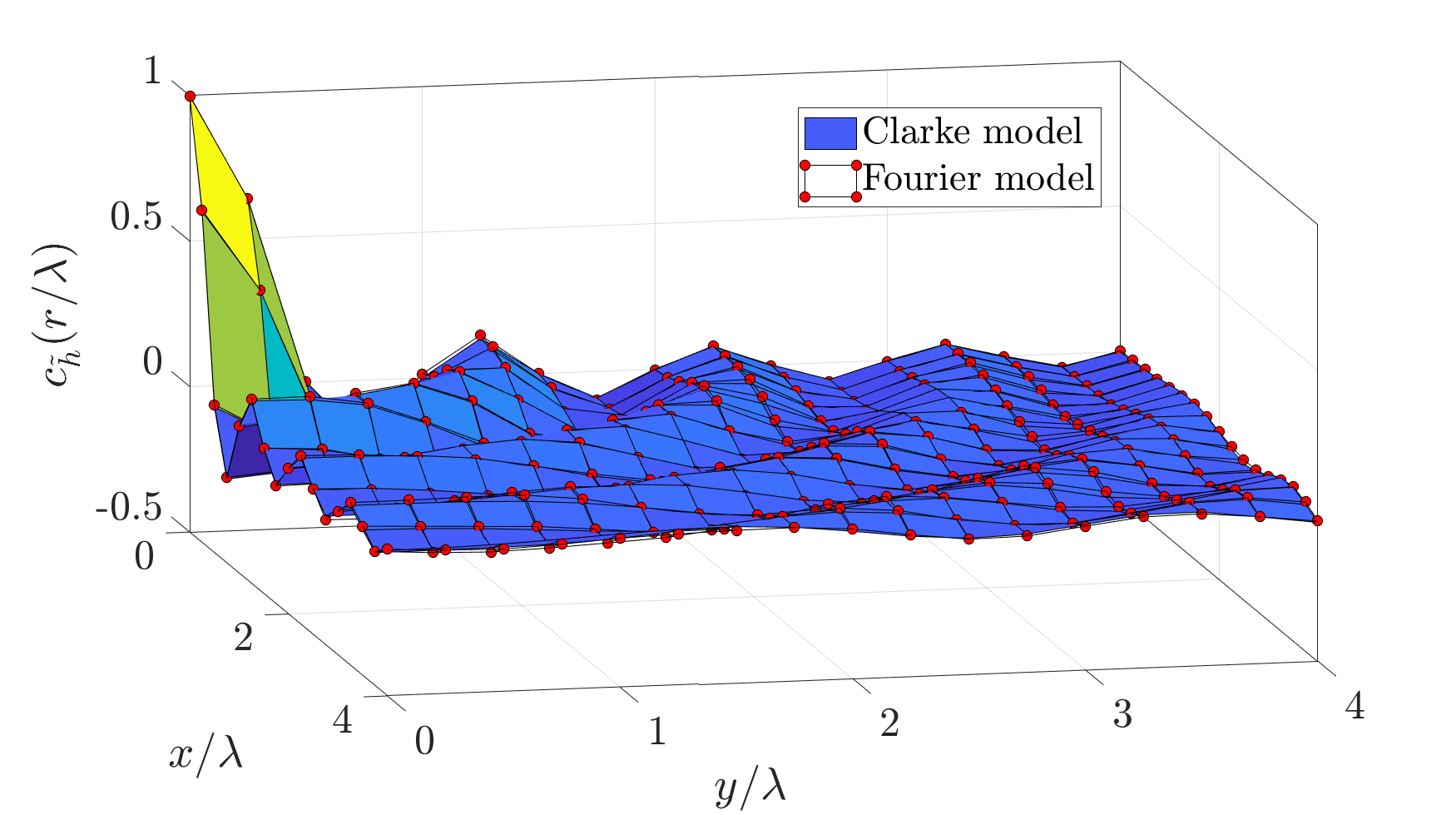} \vspace{-0.4cm}
     \caption{\vspace{-0cm}2D autocorrelation of $h(x,y,0)$ as a function of $x/\lambda \in [0, L_x/4]$ and $y/\lambda \in [0, L_y/4]$ with $L_x=L_y=16\lambda$, and $\Delta_x=\Delta_y= \lambda/4$.      
     \vspace{-0.5cm}}\label{fig:planar_aperture}
\end{figure}

 \begin{figure}[t!]
       \centering
      \includegraphics[width=.7\columnwidth]{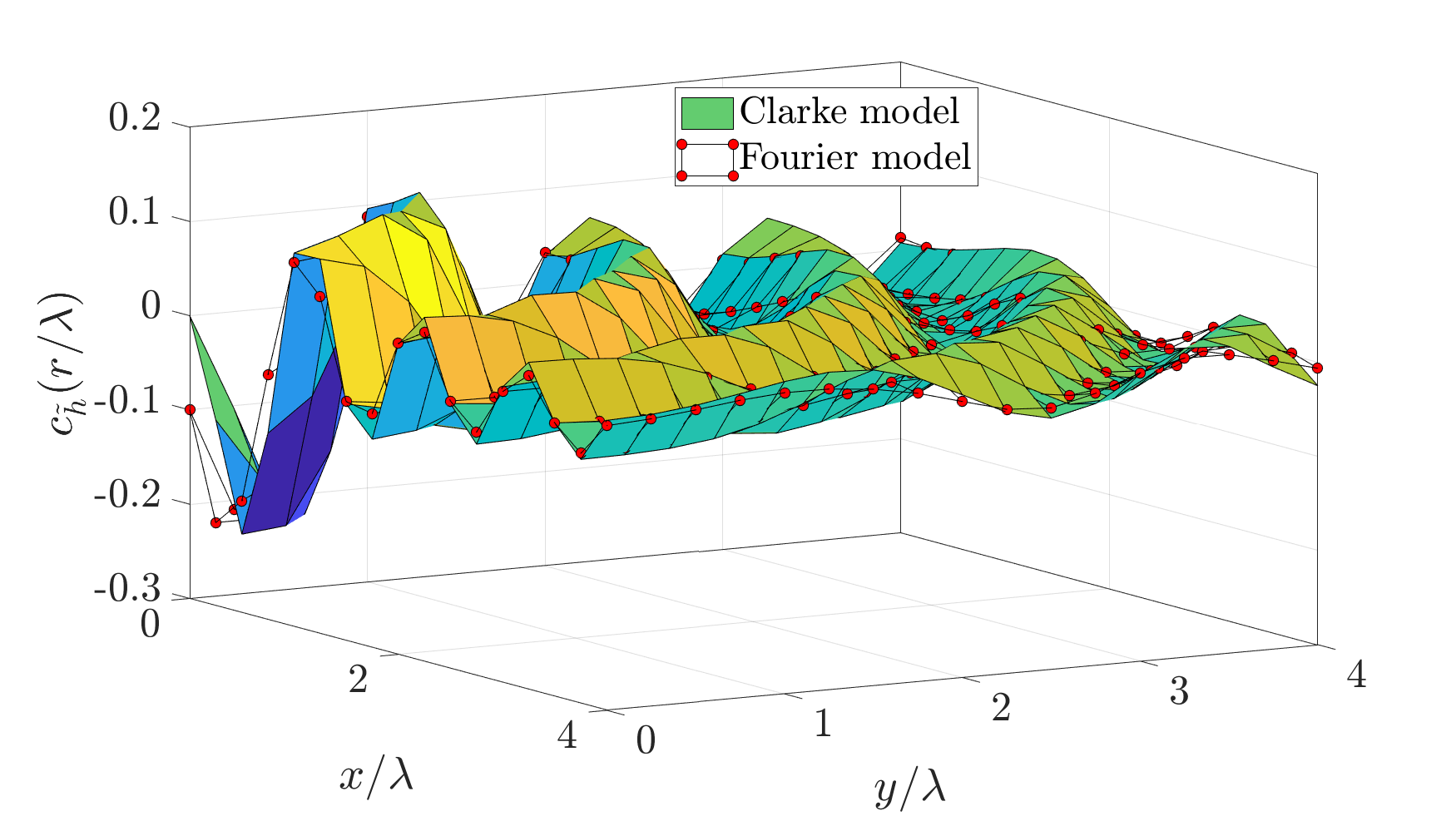} \vspace{-0.4cm}
     \caption{\vspace{-0cm}{2D autocorrelation of $h(x,y,\lambda/2)$ as a function of $x/\lambda \in [0, L_x/4]$ and $y/\lambda \in [0, L_y/4]$ with $L_x=L_y=16\lambda$, and $\Delta_x=\Delta_y= \lambda/4$.}      
     \vspace{-0.5cm}}\label{fig:volume_aperture}
\end{figure}

\section{Conclusions and Outlook}\label{sec:conclusions}
Holographic MIMO arrays, thought of as spatially-constrained MIMO arrays with a massive number of antennas $N$, are considered as a possible solution to approach the practical Massive MIMO limit $N\to \infty$. To obtain a physically-meaningful stochastic description of non-isotropic radio waves propagation in the far-field, we 
modeled the small-scale fading as a zero-mean, spatially-stationary, complex-Gaussian and scalar random field, that satisfies the Helmholtz equation. This modeling led directly to the only physically-meaningful power spectral density, which is given by the product between a Dirac delta function in the three wavenumber components, and a non-negative amplitude term that defines directional weighting.
The structure of the power spectral density provided an exact 2D Fourier plane-wave spectral representation for the small-scale fading with a singularly-integrable and bandlimited spectrum. 
Such a representation led to a 2D Fourier plane-wave series expansion for the field over spatially-constrained compact spaces, which, suitably discretized, provided an accurate and computationally-efficient numerical procedure to generate small-scale fading samples of Holographic MIMO arrays. Numerical results were used to validate this procedure with compact arrays of practical size.

We anticipate that the proposed analytical framework will be a valuable tool for the theoretical analysis of Holographic MIMO systems in the presence of frequency-flat fading.
 For example, in \cite{Pizzo2020}, it is directly used to determine the upper limit to the available degrees of freedom (DoF). It turns out that the DoF per m of a linear aperture deployment are asymptotically (as the aperture size increases) limited to $2/\lambda$ with $\lambda$ being the wavelength. For a planar deployment, the DoF per m$^2$ are limited to $\pi/\lambda^2$. The expansion of a planar aperture into a volume aperture asymptotically yields only a two-fold increase in the available DoF. The result is in agreement with previous works (e.g., \cite{Rusek2018,Franceschetti, PoonDoF}), and imposes a limit on the number of parallel channels that can be established on a communication link. Our treatment can be applied to the analysis of any multi-user communication system in which all the terminals have single antennas and are sufficiently separated in space, i.e., the spatial correlation occurs only among the service antennas (e.g. multi-user Holographic MIMO). The developed framework can also be extended to the case of vector electromagnetic random fields wherein each component of the field is a function of six Cartesian coordinates denoting the spatial positions of transmitter and receiver \cite{Marzetta2018}. This is instrumental for computing the capacity of point-to-point Holographic MIMO systems.

Finally, we observe that our treatment provides an exact representation for the small-scale fading, which is valid at any frequency range. In particular, the accuracy of the proposed model increases as the array size becomes larger compared to the wavelength. This observation makes the proposed model appealing to conduct both theoretical and numerical analysis of high-frequency communication systems operating at mmWave and THz bands where Holographic MIMO arrays are promising, and small-scale fading modeling is a fundamental challenge \cite{mmWave,Thz}.

\appendices
\vspace{-0.3cm}

%

\section*{Appendix I}
\section*{Reviewing the Fourier Spectral Representation} \label{app:spectral_representation}
Every deterministic signal of finite power can be represented either in the time domain as a waveform or in the frequency domain as a spectrum. The mapping between these two domains is the Fourier transform.
Similarly, for every zero-mean, second-order, stationary random process $y(t)$ with power spectral density $S_y(\omega)$ the {Fourier spectral representation} \cite[Eq.~(223)]{VanTreesBook}:
\begin{equation} \label{eq:Spectral_representation}
{y(t) = \frac{1}{\sqrt{2\pi}} \int_{-\infty}^{\infty} e^{\imagunit \omega t} \, dY(\omega)}
\end{equation}
provides a frequency-domain description. The integral must be interpreted in the stochastic mean-square sense and $Y(\omega)$ is the integrated Fourier transform of $y(t)$; that is, it is a Wiener process
such that in differential form $\Ex\{dY(\omega) dY^*(\omega^\prime)\} = 0 $ for $\omega\ne \omega^\prime$ and
\begin{align}
\Ex\{|dY(\omega)|^2\} = \frac{d\omega}{2\pi}  S_y(\omega)\label{eq:diff_equation_1}
\end{align}
where $dY(\omega)$ is the increment integrated Fourier transform \cite[Sec.~3.6]{VanTreesBook}.
To rewrite \eqref{eq:Spectral_representation} as a function of $S_y(\omega)$, we can proceed as follows.
Take a stationary random process $x(t)$ with power spectral density $S_x(\omega)$ and pass it through a linear time-invariant system with arbitrary frequency response $F(\omega)$. The output $y(t)$ is a random process such that \cite[Eq.~(233)]{VanTreesBook}
\begin{equation}\label{eq:Spectral_LTIfilter}
dY(\omega) = dX(\omega) F(\omega)
\end{equation}
from which it follows that $\Ex\{|dY(\omega)|^2\} = |F(\omega)|^2 \Ex\{|dX(\omega)|^2\}$
with $\Ex\{|dX(\omega)|^2\}   =  \frac{d\omega}{2\pi}  S_x(\omega)$. Combing this result with \eqref{eq:diff_equation_1} yields $S_y(\omega) = S_x(\omega) |F(\omega)|^2$ from which
\begin{equation}  \label{eq:Spectral_LTIfilter_PSD}
F(\omega) = \sqrt{S_y(\omega)/S_x(\omega)}.
\end{equation}
From \eqref{eq:Spectral_representation}, by plugging \eqref{eq:Spectral_LTIfilter_PSD} into \eqref{eq:Spectral_LTIfilter}, it follows that we can generate $y(t)$ with a given power spectrum $S_y(\omega)$ by passing a stationary white-noise random process
$x(t)$ (i.e., with $S_x(\omega)=1$ for $\omega\in\Real$) through a linear time-invariant filter with frequency response $F(\omega) = \sqrt{S_y(\omega)}$. This provides us with the linear-system form of the Fourier spectral representation
\begin{equation} \label{eq:Spectral_representation_PSD_Lebesgue}
y(t)  = \frac{1}{\sqrt{2\pi}} \int_{-\infty}^{\infty} \sqrt{S_y(\omega)} e^{\imagunit\omega t} \, dX(\omega).
\end{equation}
The Riemann integral form of \eqref{eq:Spectral_representation_PSD_Lebesgue} reads as
\begin{equation} \label{eq:Spectral_representation_PSD}
y(t)  = \frac{1}{\sqrt{2\pi}} \int_{-\infty}^{\infty} \sqrt{S_y(\omega)} W(\omega) e^{\imagunit\omega t} \, d\omega
\end{equation}
where $W(\omega)$ is a white-noise Gaussian process with unit spectrum, which can be seen as a superposition of an uncountably-infinite number of harmonics having statistically-independent Gaussian-distributed random coefficients.
For any second-order random process (i.e, with $\int_{-\infty}^{\infty} S_y(\omega) d\omega < \infty$), the above spectral representation converges in the mean-square sense and finds its justification through the linear functional of a white-noise random process $W(\omega)$ with a complex-valued square-integrable function \cite[Sec.~7.4]{GallagerBook}. 
In fact, $W(\omega)$ can be viewed as a generalized random process with $c_W(\omega)=\Ex\{W(\omega + \omega^\prime) W^*(\omega^\prime)\}=\delta(\omega)$ \cite[Sec.~7.7]{GallagerBook} in the same manner the Dirac delta function can be viewed as a generalized function, or distributions \cite{Johnson}. 
Hence, a white-noise process can be regarded as the stochastic counterpart to the Dirac delta function, which provides us with a system-theoretic interpretation of \eqref{eq:Spectral_representation_PSD}.
In particular, the generation of a random process with a given power spectral density is analogous to the generation of a deterministic signal with a given Fourier transform; see Fig.~\ref{fig:LTI_2}.


\begin{figure}[t!]
    \centering
     \includegraphics[width=.9\columnwidth]{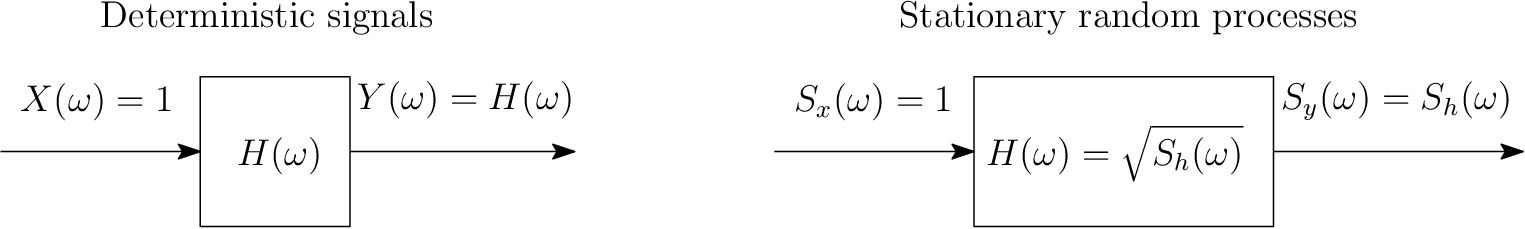}
     \caption{\vspace{-0cm}Analogy between Dirac delta function and white-noise process in linear systems.
     }\label{fig:LTI_2}\vspace{-0.8cm}
\end{figure}

\section*{Appendix II}
\section*{Power Normalization} \label{app:power_normalization}
Assume that $\tilde{h}(x,y,z)$ has unit power, i.e., $1/(2\pi)^3 \iiint_{-\infty}^{\infty} S_{\tilde{h}}(k_x,k_y,k_z) \, dk_x dk_y dk_z =1$.
By substituting \eqref{eq:spectral_factor3D} into \eqref{eq:PSD_function}, we obtain the following condition
\begin{equation} \label{eq:normalized_spectral_factor}
A_h^2(\kappa) =  \frac{(2\pi)^3}{\iiint_{-\infty}^{\infty}  \delta(k_x^2 + k_y^2 + k_z^2 - \kappa^2) \, dk_x dk_y dk_z}.
\end{equation}
The above integral can be solved by a change of integration variables to spherical coordinates
\begin{align} \label{eq:int_spherical}
 \int_{0}^{2\pi} \int_{0}^{\pi} \int_{0}^{\infty}  \delta(k_r^2 - \kappa^2) k_r^2 \sin(k_\theta) \, dk_\phi dk_\theta dk_r =  4\pi \int_{0}^{\infty}  \delta(k_r^2 - \kappa^2) k_r^2 \, dk_r 
\end{align}
where we have used $\int_{0}^{2\pi} \int_{0}^{\pi} \sin(k_\theta) \, dk_\phi dk_\theta = 4\pi$. We now observe that \cite[Eq.~181.a]{ArfkenBook}
\begin{equation} \label{eq:Dirac_delta_g_multivariate}
\delta(k_r^2 - \kappa^2) = \frac{\delta(k_r - \kappa) + \delta(k_r + \kappa)}{2\kappa}
\end{equation}
which substituted into \eqref{eq:int_spherical} yields $2\pi \kappa$, where we picked up the positive zero only.
The normalizing spectral factor is thus given by ${A_h^2(\kappa) =   4\pi^2/\kappa}$.
The spectral factor for an isotropic 2D channel can be obtained by following the same arguments above by using polar coordinates instead.


\section*{Appendix III}
\section*{Proof of Lemma~\ref{ACF_Jakes}} \label{app:autocorrelation_isotropic}

The substitution of \eqref{eq:spectral_density3D} into \eqref{eq:ACF_W} yields the autocorrelation function of an isotropic channel
 \begin{equation} \label{eq:c_iso}
c_{\tilde{h}}(x,y,z)  =  \frac{1}{4\pi \kappa} \iiint_{-\infty}^{\infty} \delta(k_x^2 + k_y^2 + k_z^2 - \kappa^2)  e^{\imagunit (k_x x + k_y y + k_z z)}\, dk_x dk_y dk_z.
\end{equation}
The above integral is a 3D Fourier inverse transform of a spherically symmetric function $\delta(k_x^2 + k_y^2 + k_z^2 - \kappa^2)$. 
Thus, by aligning the spatial vector to one of the axis, say the $z-$axis, such that $(x,y,z) = (0,0,R)$ with $R=\sqrt{x^2 + y^2 + z^2}$ we obtain
 \begin{equation} \label{eq:c_iso2}
c_{\tilde{h}}(x,y,z)   =  \frac{1}{2\pi \kappa} \iiint_{-\infty}^{\infty} \delta(k_x^2 + k_y^2 + k_z^2 - \kappa^2)  e^{\imagunit k_z R}\, dk_x dk_y dk_z.
\end{equation}
The change of integration variables from Cartesian to spherical ${(k_x,k_y,k_z) \to (k_r,k_\theta,k_\phi)}$ yields
\begin{equation} \label{eq:c_iso3} 
c_{\tilde{h}}(x,y,z)  = \frac{1}{2\pi \kappa}  \int_{0}^{\infty}  \int_{0}^{\pi}  \int_{0}^{2\pi}  \delta(k_r^2 - \kappa^2) e^{\imagunit k_r R \cos(k_\theta)} k_r^2 \sin(k_\theta) \,  dk_r dk_\theta dk_\phi
\end{equation}
which by using \eqref{eq:Dirac_delta_g_multivariate} and exploiting the angular symmetry over $k_\phi$ leads to
\begin{equation} \label{eq:c_iso4} 
c_{\tilde{h}}(x,y,z) = \frac{1}{2}   \int_{0}^{\pi}  e^{\imagunit \kappa R \cos(k_\theta)} \sin(k_\theta) \,  dk_\theta \mathop{=}^{(a)} j_0(\kappa R)
\end{equation}
where $(a)$ follows from Poisson's integral \cite[Eq.~10.1.14]{AbramowitzStegun} 
and $j_0(x)$ is the spherical Bessel function of the first kind and order $0$, defined as $j_0(x) = \sin(x)/x$.
The autocorrelation function of an isotropic 2D channel
 follows similarly from polar coordinates and the integral representation of Bessel's functions of first kind $\frac{1}{\pi}\int_{0}^\pi  e^{\imagunit z\cos(k_\theta)}  \, dk_\theta = J_0(z)$ \cite[Eq.~(9.1.21)]{AbramowitzStegun}.

\vspace{-0.3cm}
\section*{Appendix IV} \label{app:numerical_generation}

\subsection{Fourier spectral series expansion} \label{app:numerical_generation_part1}

Consider a zero-mean, second-order, stationary Gaussian random process $y(t)$ having a bandlimited \emph{singularly-integrable} spectrum $S_y(\omega)$ with $\omega\in[-\Omega, \Omega]$ of angular-frequency bandwidth $\Omega < \infty$ and defined over a  time interval $t\in(-\infty,\infty)$.
The, its frequency-domain description is given by the Fourier spectral representation in \eqref{eq:Spectral_representation_PSD}.
Let us now observe $y$ over a \emph{large, but finite}, time interval $t\in[0, T]$ of duration $T<\infty$.
We aim to provide a discrete-frequency approximation for the Fourier spectral representation in \eqref{eq:Spectral_representation_PSD} over $t\in[0, T]$. 
We start by partitioning the integration interval uniformly with frequency spacing ${\Delta_\omega=2\pi/T}$
\begin{equation} \label{eq:y_partition}
y(t)  \mathop = \frac{1}{\sqrt{2\pi}} \sum_{\ell=-\Omega T/\pi}^{\Omega T/\pi-1} \left(\int_{ 2\pi \ell/T}^{2\pi (\ell + 1)/T} \sqrt{S_y(\omega)} W(\omega) e^{\imagunit\omega t} \, d\omega \right), \quad t\in[0, T].
\end{equation}
Applying the first mean-value theorem \cite[Ch.~3]{IntegralBook} over each interval $ \omega\in[ \frac{2\pi\ell}{T}, \frac{2\pi (\ell + 1)}{T}]$ yields
\begin{equation} \label{eq:y_partition2}
y(t)  \approx \frac{1}{\sqrt{2\pi}} \sum_{\ell=-\Omega T/\pi}^{\Omega T/\pi-1}  \left( \int_{ 2\pi \ell/T}^{2\pi (\ell + 1)/T} \sqrt{S_y(\omega)} W(\omega)  \, d\omega \right) \, e^{\imagunit 2\pi (\ell + 1/2) t/T}, \quad t\in[0, T]
\end{equation}
where the approximation error becomes negligible as ${\Delta_\omega/\Omega \to 0}$ (i.e, $\Omega T\to \infty$).
The integral between brackets is a linear functional of a white Gaussian noise process $W(\omega)$ \cite[Ch.~7.4]{GallagerBook}
\begin{equation} \label{Y_l}
Y_\ell = \frac{1}{\sqrt{2\pi}} \int_{ 2\pi \ell/T}^{2\pi (\ell + 1)/T} \sqrt{S_y(\omega)} W(\omega)  \, d\omega= \int_{-\infty}^{\infty} g_\ell(\omega) W(\omega) \, d\omega 
\end{equation}
with a real-valued square-integrable ($\mathcal{L}_2$) function
 \begin{equation} \label{eq:orthogonal_functions}
g_\ell(\omega) = 
\begin{cases}
\sqrt{S_y(\omega)/2\pi} & \omega\in[ \frac{2\pi\ell}{T}, \frac{2\pi (\ell + 1)}{T}] \\
0 & \mathrm{elsewhere}.
\end{cases}
\end{equation} 
We notice that $\{g_\ell(\omega)\}$ with $\ell\in\{-\Omega T/\pi, \ldots, \Omega T/\pi-1\}$ are such that $\int_{-\infty}^{\infty} g_\ell(\omega) g_m(\omega)\, d\omega = \sigma_\ell^2 \delta_{m-\ell}$ and each function has energy
\begin{equation} \label{energy_l}
\sigma_\ell^2 = \int_{ 2\pi \ell/T}^{2\pi (\ell + 1)/T} S_y(\omega) \, \frac{d\omega}{2\pi} < \infty
\end{equation}
which is finite for every second-order random process.
Thus $\{g_\ell(\omega)\}$ form a set of orthogonal functions, or better, an orthogonal basis for the space of real-valued $\mathcal{L}_2$ functions.
As a consequence, we can interpret \eqref{Y_l} as the $\ell-$th coordinate of an orthonormal series expansion of $W$ over the basis $\{g_\ell(\omega)\}$.
Now, the expansion of any white Gaussian noise process over an arbitrary orthogonal $\mathcal{L}_2$ basis of functions produces a sequence ${Y_\ell \sim \CN(0, \sigma_\ell^2)}$ of independent zero-mean, circularly-symmetric Gaussian random variables with finite variances $\sigma_\ell^2$ \cite[Ch.~7.7]{GallagerBook}.
Thus, we may approximate \eqref{eq:Spectral_representation_PSD} for a bandlimited spectra $S_y$ as 
\begin{equation} \label{eq:DTFT}
y(t)  \approx  \sum_{\ell=-\Omega T/\pi}^{\Omega T/\pi-1} Y_\ell \, e^{\imagunit 2\pi (\ell+1/2) t/T} \mathop \sim \limits^{(a)} \sum_{\ell=-\Omega T/\pi}^{\Omega T/\pi-1} Y_\ell \, e^{\imagunit 2\pi \ell t/T}, \quad t\in[0, T]
\end{equation}
where $(a)$ holds in the statistical distribution sense and it is due the phase-invariance of circularly-symmetric Gaussian random variables.
The approximation error becomes negligible as $\Omega T \to \infty$.
The above formula provides an orthonormal decomposition of $y$ taking the form of a bandlimited Fourier spectral series expansion having a \emph{countably-finite} number of statistically-independent Gaussian-distributed coefficients.
Hence, it provides an approximation of $y$ over $t\in[0,T]$ through a \emph{periodic} stationary random process over its fundamental period $T=2\pi/\Delta_\omega$.
The convergence of \eqref{eq:DTFT} in mean square may be proven by recurring to the Parseval's theorem and use \eqref{energy_l}
\begin{equation}
\Ex\{|y(t)|^2\} =  \sum_{\ell=-\Omega T/\pi}^{\Omega T/\pi-1} \sigma_\ell^2 \mathop = \int_{-\Omega}^{\Omega} S_y(\omega) \, \frac{d\omega}{2\pi}  < \infty.
\end{equation}
Similar to the Fourier series of a time-domain waveform, \eqref{eq:DTFT} provides a periodic output. However, the Fourier coefficients $Y_\ell$ of $y(t)$ cannot be simply obtained by frequency sampling its spectra at $\ell \Delta_\omega=2\pi \ell/T$. This is better explained next.

\vspace{-0.3cm}
\subsection{Connection to the Karhunen-Lo{\`e}ve series expansion}

The Fourier series expansion \eqref{eq:DTFT} is reminiscent of the famous Karhunen-Lo{\`e}ve series expansion $y(t) = \sum_\ell c_\ell \varphi_\ell(t)$ with $t\in[0,T]$ where $c_\ell$ and $\varphi_\ell$ are the eigenvalues and eigenfunctions of such expansion.
This is the continuous analog of the decomposition of a random vector into its principal components (i.e., eigenvalue decomposition) \cite{VanTreesBook}.
In particular, for a bandlimited random process of bandwidth $\Omega$, \eqref{eq:DTFT} can be interpreted as the asymptotic version of the Karhunen-Lo{\`e}ve series expansion \cite[Sec.~3.4.6]{VanTreesBook}.
In fact, as the observation interval $T$ grows large (i.e., $\Omega T\gg 1$) the eigenvalues' power $\Ex\{|c_\ell|^2\}$ approach the power spectral density $S_y$, and the eigenfunctions $\varphi_\ell$ become harmonics.
Thus, the Fourier coefficients $\{Y_\ell\}$ and basis $\{e^{\imagunit 2\pi\ell t/T}\}$ respectively assume the meaning of Karhunen-Lo{\`e}ve's eigenvalues and eigenfunctions.

We finally notice that the direct application of the Karhunen-Lo{\`e}ve expansion to the modeling of $y$ would lead to a divergent series due to the evaluation of the singular spectrum $S_y$ on its boundary. This is analogous to the problem encountered in the use of the Fourier spectral representation in Section~\ref{sec:Spectral}.
Asymptotically as $\Omega T \to \infty$, the Fourier coefficients become a continuum and both the Fourier series and Karhunen-Lo{\`e}ve expansion tend to the Fourier spectral representation in \eqref{eq:Spectral_representation_PSD}.

\vspace{-0.3cm}
\subsection{Computation of variances of Fourier coefficients} \label{app:numerical_generation_part2}

The variance $\sigma_\ell^2$ of the $\ell-$th Fourier random coefficient $H_\ell$ in \eqref{eq:Fourier_series_linear} is computed from \eqref{energy_l} by replacing the time-frequency mapping with its space-wavenumber counterpart:
\begin{equation} \label{variance_linear_aperture}
\sigma_\ell^2 = \int_{ 2\pi \ell/L_x}^{2\pi (\ell + 1)/L_x} S_{\tilde{h}}(k_x) \, \frac{dk_x}{2\pi}  = \int_{ 2\pi \ell/L_x}^{2\pi (\ell + 1)/L_x} \frac{1}{\sqrt{\kappa^2 - k_x^2}} \, \frac{dk_x}{2\pi} =  \frac{1}{2\pi} \int_{\ell \lambda/L_x}^{(\ell+1) \lambda/L_x} \frac{1}{\sqrt{1 - k_x^2}} \, dk_x
\end{equation}
where we substitute the 1D power spectrum $S_{\tilde{h}}$ in \eqref{eq:Spectral_Factor_projection_2d} and use $\kappa=2\pi/\lambda$.
Since the integrand is symmetric with respect the origin, the variances corresponding to the negative indexes can be easily found by symmetry $\sigma_{-\ell-1}^2 = \sigma_{\ell}^2$ for $\ell=0, 1, \ldots, L_x/\lambda-1$.
By applying, e.g., the change of integration variable $k_x = \sin u$ we obtain
\begin{equation} \label{sigma_elle_positive3}
\sigma_\ell^2 = \frac{1}{2\pi} \left( \mathrm{arcsin}\left((\ell+1) \frac{\lambda}{L_x}\right)  -  \mathrm{arcsin}\left(\ell \frac{\lambda}{L_x}\right) \right), \quad \ell=0, 1, \ldots, L_x/\lambda-1.
\end{equation}
Similarly, from \eqref{energy_l} the variance $\sigma_{\ell,m}^2$ of the $(\ell,m)-$th Fourier random coefficient $H_{\ell m}$ in \eqref{eq:Fourier_series_planar} is
\begin{align}  \notag
\sigma_{\ell,m}^2  & =  \int_{2\pi \ell/L_x}^{2\pi(\ell+1)/L_x} \int_{2\pi m/L_y}^{2 \pi (m+1)/L_y} S_{\tilde{h}}(k_x,k_y) \, \frac{dk_x}{2\pi} \frac{dk_y}{2\pi},  \quad (\ell,m)\in \mathcal{E}
\\& \notag
 =  \int_{2\pi \ell/L_x}^{2\pi(\ell+1)/L_x} \int_{2\pi m/L_y}^{2 \pi (m+1)/L_y} \frac{\pi}{\kappa} \frac{\mathbbm{1}_{\mathcal{D}(\kappa)}(k_x,k_y)}{\sqrt{\kappa^2 - k_x^2 - k_y^2}} \, \frac{dk_x}{2\pi} \frac{dk_y}{2\pi},  \quad (\ell,m)\in \mathcal{E}
\\& \label{sigma_elle_p_positive}
 =  \frac{1}{4\pi} \int_{\ell \lambda/L_x}^{(\ell+1) \lambda/L_x} \int_{m \lambda/L_y}^{(m+1) \lambda/L_y} \frac{\mathbbm{1}_{\mathcal{D}(1)}(k_x,k_y)}{\sqrt{1 - k_x^2 - k_y^2}} \, dk_x dk_y,  \quad (\ell,m)\in \mathcal{E}
\end{align}
where we substitute the 2D power spectrum $S_{\tilde{h}}$ in \eqref{eq:Spectral_Factor_projection_3d_isotropic} and use $\kappa=2\pi/\lambda$. The set $\mathcal{E}$ is the 2D lattice ellipse as defined in \eqref{ellipse}.
Due to the rotational symmetry of the integrand we focus on the first wavenumber quadrant only, that is, $\ell=0, 1, \ldots, L_x/\lambda-1$ and $m=0, 1, \ldots, L_y/\lambda-1$ from which we can derive the variances in all the other quadrants.  
By change of integration variable to polar coordinates $(k_x,k_y) = (k_r \cos k_\phi, k_r \sin k_\phi)$ with $k_r\in[0,1]$ and $k_\phi\in[0,\pi/2)$, for the case $\ell\ge m$ we obtain
\begin{align} \notag
\sigma_{\ell,m}^2  & =  \frac{1}{4\pi} \Bigg(\int_{k_{\phi,1}}^{k_{\phi,2}} \int_{\min\left(1,\frac{m \lambda}{L_y \sin k_\phi}\right)}^{\min\left(1,\frac{(\ell+1)\lambda}{L_x \cos k_\phi}\right)} \frac{k_r }{\sqrt{1 - k_r^2}} \, dk_r dk_\phi 
+ \int_{k_{\phi,2}}^{k_{\phi,3}} \int_{\min\left(1,\frac{\ell \lambda}{L_x \cos k_\phi}\right)}^{\min\left(1,\frac{(\ell+1) \lambda}{L_x \cos k_\phi}\right)} \frac{k_r }{\sqrt{1 - k_r^2}} \,  dk_r dk_\phi \\&   \label{sigma_elle_p2} 
+ \int_{k_{\phi,3}}^{k_{\phi,4}} \int_{\min\left(1,\frac{\ell \lambda}{L_x \cos k_\phi}\right)}^{\min\left(1,\frac{(m+1) \lambda}{L_y \sin k_\phi}\right)} \frac{k_r }{\sqrt{1 - k_r^2}} \Bigg) \,  dk_r dk_\phi, \quad \ell\ge m
\end{align}
where $k_{\phi,1} = \arctan\left(\frac{m L_x}{(\ell+1) L_y}\right)$, $k_{\phi,2} = \arctan\left(\frac{m L_x}{\ell L_y}\right)$, $k_{\phi,3} = \arctan\left(\frac{(m+1) L_x}{(\ell+1) L_y}\right)$, and $k_{\phi,4} = \arctan\left(\frac{(m+1) L_x}{\ell L_y}\right)$ such that $k_{\phi,1} \le k_{\phi,2} \le k_{\phi,3}  \le k_{\phi,4}$.
Now, the integration of \eqref{sigma_elle_p2} over the radial wavenumber component yields to
\begin{align} \notag
&  \sigma_{\ell,m}^2  =  \frac{1}{4\pi} \Bigg(\int_{k_{\phi,1}}^{k_{\phi,2}} \sqrt{1 - \min\left(1, \frac{1}{a_{11}^2 \sin^2 k_\phi}\right)} \, dk_\phi  - \int_{k_{\phi,1}}^{k_{\phi,2}} \sqrt{1 - \min\left(1, \frac{1}{a_{12}^2 \cos^2 k_\phi}\right)} \, dk_\phi  \\&   \label{sigma_elle_p3} 
+ \int_{k_{\phi,2}}^{k_{\phi,3}} \sqrt{1 - \min\left(1, \frac{1}{a_{21}^2 \cos^2 k_\phi}\right)} \, dk_\phi  - \int_{k_{\phi,2}}^{k_{\phi,3}} \sqrt{1 - \min\left(1, \frac{1}{a_{22}^2 \cos^2 k_\phi}\right)} \, dk_\phi  \\&   \notag
+ \int_{k_{\phi,3}}^{k_{\phi,4}} \sqrt{1 - \min\left(1, \frac{1}{a_{31}^2 \cos^2 k_\phi}\right)} \, dk_\phi  - \int_{k_{\phi,3}}^{k_{\phi,4}} \sqrt{1 - \min\left(1, \frac{1}{a_{32}^2 \sin^2 k_\phi}\right)} \, dk_\phi \Bigg), \quad \ell\ge m
\end{align}
where ${a_{11} = \frac{L_y}{\lambda m}}$, ${a_{12} = \frac{L_x}{\lambda (\ell+1)}}$, ${a_{21} = \frac{L_x}{\lambda \ell}}$, ${a_{22} = \frac{L_x}{\lambda (\ell+1)}}$, ${a_{31} =\frac{L_x}{\lambda \ell}}$, and ${a_{32} = \frac{L_y}{\lambda (m+1)}}$.
The above expression can be rewritten equivalently as
\begin{align} \notag
&  \sigma_{\ell,m}^2   =  \frac{1}{4\pi} \Bigg(\int_{\max\left(k_{\phi,1},\mathrm{arcsin}\left(\frac{1}{a_{11}}\right)\right)}^{\max\left(k_{\phi,2},\mathrm{arcsin}\left(\frac{1}{a_{11}}\right)\right)} \sqrt{1 - \frac{1}{a_{11}^2 \sin^2 k_\phi}} \, dk_\phi  - \int_{\min\left(k_{\phi,1},\mathrm{arccos}\left(\frac{1}{a_{12}}\right)\right)}^{\min\left(k_{\phi,2},\mathrm{arccos}\left(\frac{1}{a_{12}}\right)\right)} \sqrt{1 - \frac{1}{a_{12}^2 \cos^2 k_\phi}} \, dk_\phi  \\& \notag \!\!\!\!\!
+ \int_{\min\left(k_{\phi,2},\mathrm{arccos}\left(\frac{1}{a_{21}}\right)\right)}^{\min\left(k_{\phi,3},\mathrm{arccos}\left(\frac{1}{a_{21}}\right)\right)} \sqrt{1 - \frac{1}{a_{21}^2 \cos^2 k_\phi}} \, dk_\phi  - \int_{\min\left(k_{\phi,2},\mathrm{arccos}\left(\frac{1}{a_{22}}\right)\right)}^{\min\left(k_{\phi,3},\mathrm{arccos}(\frac{1}{a_{22}})\right)} \sqrt{1 - \frac{1}{a_{22}^2 \cos^2 k_\phi}} \, dk_\phi  \\&   \label{sigma_elle_p4} \!\!\!\!\!
+ \int_{\min\left(k_{\phi,3},\mathrm{arccos}\left(\frac{1}{a_{31}}\right)\right)}^{\min\left(k_{\phi,4},\mathrm{arccos}\left(\frac{1}{a_{31}}\right)\right)} \sqrt{1 - \frac{1}{a_{31}^2 \cos^2 k_\phi}} \, dk_\phi  - \int_{\max\left(k_{\phi,3},\mathrm{arcsin}\left(\frac{1}{a_{32}}\right)\right)}^{\max\left(k_{\phi,4},\mathrm{arcsin}\left(\frac{1}{a_{32}}\right)\right)} \sqrt{1 - \frac{1}{a_{32}^2 \sin^2 k_\phi}} \, dk_\phi\Bigg)
\end{align}
where we used the fact that each of the above integrals yields non-zero value in the intervals $k_\phi \in [\arcsin(1/a), \pi/2]$ $\left(k_\phi \in [0, \mathrm{arccos}(1/a)]\right)$, where $a>1$ is one of the parameters listed above. 
The variances $\sigma_{\ell,m}^2$ are obtained by solving the following indefinite integrals \cite[Ch.~2.611]{IntegralBook}
\begin{align}
\int \sqrt{1 - \frac{1}{a^2 \sin^2 k_\phi}} \, dk_\phi & = \frac{1}{a} \arctan\left(\frac{\cos k_\phi}{\sqrt{a^2 \sin^2 k_\phi - 1}} \right) - a \arcsin \left(\frac{\cos k_\phi}{ \sqrt{1-1/a^2}} \right) \\
\int \sqrt{1 - \frac{1}{a^2 \cos^2 k_\phi}} \, dk_\phi & = -\frac{1}{a}  \arctan\left(\frac{\sin k_\phi}{\sqrt{a^2 \cos^2 k_\phi - 1}} \right) + \arcsin \left( \frac{\sin k_\phi}{ \sqrt{1-1/a^2}} \right).
\end{align}
Mathematical details are omitted for space limitations and will be provided upon request.
Finally, the case ${\ell< m}$ can be treated similarly by exchanging $k_{\phi,2}$ and $k_{\phi,3}$ so that now $k_{\phi,2} = \arctan\left(\frac{(m+1) L_x}{(\ell+1) L_y}\right)$ and $k_{\phi,3} = \arctan\left(\frac{m L_x}{\ell L_y}\right)$, and change the integration limits of \eqref{sigma_elle_p2} accordingly, 
which can be again solved by following the same procedure explained above.

\bibliographystyle{IEEEtran}
\bibliography{IEEEabrv,JSAC_SmallScale}
\end{document}